\documentclass[journal]{IEEEtran}
\usepackage{tabularx}

\ifCLASSINFOpdf
\else
   \usepackage[dvips]{graphicx}
\fi

 \makeatletter
\newcommand\notsotiny{\@setfontsize\notsotiny\@vipt\@viipt}
\makeatother

\usepackage{cite}
\usepackage{amsmath,amssymb,amsfonts,amsthm}
\usepackage{graphicx}
\usepackage{textcomp}

\usepackage{xcolor}
\usepackage[colorlinks=true,  linkcolor=black, citecolor=blue, breaklinks=true, urlcolor=blue]{hyperref}

\usepackage{graphicx}
\usepackage{thmtools}
\usepackage[ruled,vlined,linesnumbered]{algorithm2e}
\SetKwRepeat{Do}{do}{while}%

\usepackage{xcolor}
\usepackage{booktabs}
\usepackage{url}
\usepackage[latin1]{inputenc}

\definecolor{darkgreen}{rgb}{0.2, 0.6, 0.05}
\usepackage{enumitem}

\usepackage{subcaption}

\newcounter{ctheorem}
\newtheorem{theorem}[ctheorem]{Theorem}

\newcounter{clemma}
\newtheorem{lemma}[clemma]{Lemma}

\usepackage{pgfplots}
\usepackage{tikz}
\usepgfplotslibrary{groupplots}
\pgfplotsset{compat=1.13}
\newlength\fwidth

\usepackage{algorithmic}
\let\oldnl\nl
\newcommand{\nonl}{\renewcommand{\nl}{\let\nl\oldnl}}

\usetikzlibrary{arrows,petri,topaths}

\newcommand{\E}{\mathbb{E}}                  
\newcommand{\bo}[1]{\mathbf{#1}}              
\newcommand{\bom}[1]{\boldsymbol{#1}}    
\newcommand{\tildeb}{\mathbf{c}}

\newcommand{\bmu}{\boldsymbol{\mu}}

\newcommand{\supp}{\mathsf{supp}}
\newcommand{\pdim}{M}

\newcommand{\ns}{Q} 
\newcommand{\Gam}{\boldsymbol{\Gamma}}

\renewcommand{\Pi}{{\mathcal P}}
\newcommand{\hop}{\mathsf{H}}        

\newcommand{\gfnc}{g}  
\newcommand{\ffnc}{f} 

\newcommand{\beq}{\begin{equation}}
\newcommand{\eeq}{\end{equation}}
\newcommand{\bmat}{\begin{pmatrix}}
\newcommand{\emat}{\end{pmatrix}}

\def\Gam{\mathbf{\Gamma}}

\newcommand{\I}{\bo I} 
\newcommand{\R}{\mathbb{R}}

\newcommand{\X}{{\bf X}}
\newcommand{\SCM}{\mathbf{S}}

\newcommand{\A}{\mathbf{A}}

\newcommand{\Y}{\mathbf{Y}}
\newcommand{\mE}{\mathbf{E}}

\newcommand{\B}{{\bo B}}
\renewcommand{\S}{\mathbf{S}} 

\newcommand{\kdim}{K}
\newcommand{\sigmaw}{\sigma^2}

\newcommand{\C}{\mathbb{C}} 
\newcommand{\rsupp}{\mathsf{supp}} 

\newcommand{\y}{\bo y}

\renewcommand{\b}{\mathbf{b}}

\newcommand{\x}{\bo x}

\newcommand{\e}{\mathbf{e}}

\renewcommand{\a}{{\bo a}}
\renewcommand{\c}{\mathbf{c}}
\newcommand{\p}{{\bo p}}

\newcommand{\M}{\bom \Sigma}

\DeclareMathOperator{\tr}{tr}
\DeclareMathOperator{\Tr}{tr}
\DeclareMathOperator{\diag}{diag}
\DeclareMathOperator{\cov}{cov}

\newcommand{\bm}[1]{\mathbf{#1}}   
 \newcommand{\Hmat}{\mathsf{H}}

\newcommand{\gam}{\boldsymbol{\gamma}}
\newcommand{\iter}{k}

\begin{document}

\title{Efficient SCA Algorithms for Sparse Bayesian Learning}

\author{Esa~Ollila,~\IEEEmembership{Senior Member},~IEEE, Majdoddin Esfandiari,~\IEEEmembership{Member,~IEEE}, Daniel P. Palomar,~\IEEEmembership{Fellow,~IEEE}, and Xinjue Wang,~\IEEEmembership{Member,~IEEE}
\thanks{Esa Ollila, Majdoddin Esfandiari and Xinjue Wang are with the Department of Information and Communications Engineering, Aalto University, FI-00076 Aalto, Finland (e-mail: \{esa.ollila,majdoddin.esfandiari\}@aalto.fi, xjw@ieee.org).  Daniel P. Palomar is with the Hong Kong University of Science and Technology, Hong Kong (e-mail: palomar@ust.hk). The work of M. Esfandiari was supported by the Research Council of Finland (RCF) under Grant 368630,  D. Palomar in part by the Hong Kong GRF 16206123 research grant and X. Wang was supported by the RCF under Grant 359848.}}

\maketitle

\begin{abstract}
 Type-II sparse Bayesian learning (SBL) for the multiple measurement vector model recovers a sparse signal power vector by minimizing the marginal negative log-likelihood via the expectation-maximization (EM) algorithm, referred to as M-SBL.  We revisit the covariance learning (CL) problem and develop three parallel first-order algorithms within  the successive convex approximation (SCA) framework. CL-SCA minimizes per-coordinate surrogates formed by partially linearizing the log-determinant term, while CL-FP uses a damped projected fixed-point iteration derived from stationarity conditions. Both methods have closed-form updates. We prove that the fixed points of both update maps are exactly the KKT points of the CL problem, that each update direction satisfies a quantitative descent inequality, that the damped iterates achieve sufficient decrease with an $\mathcal{O}(T^{-1/2})$ best-iterate stationarity rate, and we derive a backtracking rule that retains this rate without step-size tuning. If damping factors are bounded away from zero, the residual vanishes along the whole sequence and every limit point is a KKT point. Local analysis motivates CL-Hyb, which combines CL-SCA and CL-FP. Experiments on device activity detection in massive random access show that CL-Hyb-bt matches the best CL solvers' accuracy at a fraction of the runtime and is over an order of magnitude faster than EM.
\end{abstract}

\begin{IEEEkeywords}
Activity detection, covariance learning,  grant-free random access,  multiple measurement vector model, sparse Bayesian learning, sparse signal recovery, successive convex approximation, Type-II likelihood.
\end{IEEEkeywords}
\IEEEpeerreviewmaketitle

 \definecolor{crimson2143940}{RGB}{214,39,40}
\definecolor{darkgray176}{RGB}{176,176,176}
\definecolor{darkorange25512714}{RGB}{255,127,14}
\definecolor{forestgreen4416044}{RGB}{44,160,44}
\definecolor{mediumpurple148103189}{RGB}{148,103,189}
\definecolor{steelblue31119180}{RGB}{31,119,180}
 
\tikzset{
  hyb/.style ={semithick, steelblue31119180,     solid,   mark=*,        mark size=1.2},
  cwo/.style ={semithick, darkorange25512714,    dashed,  mark=square,   mark size=2.2,
               mark options={solid, fill=none}},
  clmm/.style={semithick, forestgreen4416044,    dotted,  mark=o,        mark size=3.3,
               mark options={solid, fill=none}},
  clmp/.style={thick,     crimson2143940,        solid,   mark=triangle*,mark size=2.2},
  msbl/.style={thick,     mediumpurple148103189, solid,   mark=diamond*, mark size=2.2},
}

\vspace{-0.3cm}
 
\section{Introduction}
\label{sec:intro}

\IEEEPARstart{I}{n the}  \emph{multiple measurements vector} (MMV) \cite{duarte2011structured} model, the measurement vectors  $\y_t \in \mathbb{C}^M$ follow the generative model $
\y_{t} = \A \x_{t}  + \mathbf{e}_{t}$, $t = 1,\ldots,  \ns$,
where $\A=(\a_1  \ \cdots \ \a_{N}) \in \C^{\pdim \times N}$ is a fixed (known) overcomplete matrix ($N > M $) 
and   $\mathbf{e}_t  \in \C^{\pdim}$ is an unobserved zero mean white noise random vector,  i.e., $\cov(\e_t)=\sigmaw \mathbf{I}$. 
Matrix $\A$ is  called the \emph{dictionary} or \emph{measurement matrix}, and its column vectors $\mathbf{a}_i$ are referred to as \emph{atoms}. 
The unobserved random  signal vectors $\x_t = (x_{1t}, \ldots,x_{Nt})^\top$, $t=1,\ldots,\ns$ are assumed to be jointly sparse, {\it i.e.,} most of their elements are zero, and all $\x_t$-s share the same support.   
 Letting $\mathbf{Y} = ( \y_1 \, \cdots \,  \y_{\ns} ) \in \C^{\pdim \times \ns}$ denote the matrix of measurement vectors, we can write the model in matrix form as  
 \beq \label{eq:MMVmodel} 
\mathbf{Y}  = \A \mathbf{X} + \mathbf{E},
\eeq 
where matrices $\X  = (x_{nt}) \in \C^{N \times \ns}$ and  $\mE  = (e_{mt})  \in \C^{M \times \ns} $  contain the  signal  and noise vectors as columns, respectively. 

Given $K$ signals, the key assumption underlying the  MMV model  is that the signal matrix $\mathbf{X}$ is $\kdim$-rowsparse, {\it i.e.,} at most  $\kdim$ rows of $\X$ contain
non-zero entries. The rowsupport  of  $\X \in \C^{N \times \ns}$ is the index set of rows  containing non-zero  elements:
\[
\mathcal M = \rsupp(\X) 
= \{   n \in  [\! [   N ] \!]   :  x_{nt} \neq 0 \,  \mbox{ for some $t\in [\! [ \ns ] \!] $} \} 
\]
where $[\! [  N  ] \!]  = \{1,\ldots, N \}$. 
Since $\X$ is $\kdim$-rowsparse,  i.e.,  $| \rsupp(\X)| = \kdim$, 
joint estimation of $\x_t$-s using \eqref{eq:MMVmodel} can lead  both to computational advantages and increased reconstruction accuracy \cite{tropp2006algorithms,tropp:2006,chen_huo:2006,eldar_rauhut:2010,duarte2011structured,blanchard2014greedy}. 
 In sparse signal recovery (SSR) problems, the goal is identifying the support $\mathcal M$ and possibly $\X$, given only the data $\Y$, the dictionary $\A$, and possibly the sparsity level $K$. 
 
In Sparse Bayesian Learning (SBL)~\cite{tipping2001sparse,wipf2004sparse}, the signal vectors $\mathbf{x}_t$-s  are modeled as  i.i.d. circular Gaussian random vectors with independent elements and zero mean.  Additionally, $\mathbf{x}_t$-s are assumed to be statistically independent of the noise vectors $\e_t$-s. The noise vectors are assumed to be i.i.d. Gaussian with $\e_t \sim \mathcal{CN}_M(\mathbf{0}, \sigma^2 \mathbf{I})$. This implies that  
$ \y_t \sim \mathcal{C N}_{\pdim}(\mathbf{0}, \M)$, where the positive definite Hermitian (PDH) $\pdim \times \pdim$ covariance matrix $\M=\cov(\y_t)$ has the form 
\beq \label{eq:M}
\M = \A \Gam \A^\hop + \sigma^2 \mathbf{I}  =  \sum_{n=1}^N \gamma_n \a_n \a_n^\hop + \sigma^2 \mathbf{I}  
\eeq
where the signal covariance matrix $\Gam = \cov(\x_t) = \diag(\gam)$  is a diagonal matrix and $\gam \in \mathbb{R}^N_{\geq 0}$ is a vector of signal powers with only $K$ non-zero elements, i.e.,  $\mathcal M = \rsupp(\X)=\supp(\gam)$.    We assume $\sigma^2>0$ to be known. In many applications, the noise variance $\sigma^2$ can be accurately estimated a priori and can thus be considered as a known constant. 

\emph{Covariance learning} (CL-)based support recovery algorithms can be constructed by  
minimizing the negative log-likelihood function (LLF) of the data $\Y$. 
 We let  
$$\phi(\cdot ; \bmu, \M) =  | \M |^{-1}  \exp(- [\cdot-\bmu]^\hop \M^{-1} [\cdot-\bmu])$$ 
denote the probability density function  (up to an irrelevant normalizing constant) of the multivariate circular Gaussian distribution $\mathcal{CN}_M(\bmu, \M)$ with mean  vector $\bmu$ and covariance matrix $\M$. Here,  $| \cdot |$ denotes the matrix determinant. 

Then, the scaled negative LLF of the data $\Y $ is 
\beq
\label{eq:conv+nonconv}
\ell(\gam) = -\frac{1}{\ns} \log \Big(\prod_{t=1}^\ns \phi(\y_t ; \mathbf{0}, \M) \Big) = \gfnc(\gam) +  \ffnc(\gam),
\eeq
where $g : \mathbb{R}^N_{\geq 0} \to \R$ is a convex continuously differentiable in $\mathbb{R}^N_{\geq 0}$ 
and $f : \mathbb{R}^N_{\geq 0} \to \R$ is a concave continuously differentiable function, 
 defined as, 
\begin{align}
\gfnc(\gam) &= \Tr ( (\bm{A}\diag(\gam)\A^\hop + \sigma^2\,\bm{I})^{-1}\,\S ) \label{eq:def:conv}\\
\ffnc(\gam) &= \log |\bm{A}\diag(\gam)\bm{A}^\Hmat + \sigma^2\,\bm{I} | ,\label{eq:def:nonconv}
\end{align}
where $\tr(\cdot)$ denotes the matrix trace, and 
$ \SCM = \frac{1}{\ns} \sum_{t=1}^{\ns} \y_{t} \y_t^\hop$ 
is the sample covariance matrix (SCM) of the data.  We then study the nonconvex optimization problem
 \begin{equation}\label{eq:orig_formulation}
\begin{array}{ll}
\underset{\gam \geq \mathbf{0}}{\text{minimize}} & \ell( \boldsymbol{\gamma})
\end{array}
 \end{equation}
and derive  efficient algorithms using the SCA framework \cite{scutari2013decomposition}. 

\subsection{Relevant Prior Work}

A wide range of methods have been developed for the MMV problem in~\eqref{eq:MMVmodel}. Convex relaxation methods treat $\X$ as deterministic and promote row sparsity through $\ell_{2,1}$-type penalties~\cite{tropp:2006}. Greedy methods, such as simultaneous orthogonal matching pursuit (SOMP)~\cite{tropp2006algorithms}, recover the common support sequentially at low computational complexity.
Another class of deterministic row-sparse recovery methods relies on iterative reweighting or non-convex sparsity-promoting penalties, with M-FOCUSS~\cite{cotter2005sparse} as a representative example. When the source and noise vectors are modeled as independent random Gaussian vectors, the resulting marginal likelihood leads to Type-II objective in \eqref{eq:conv+nonconv}  as in SBL \cite{tipping2001sparse,wipf2004sparse,wipf2007empirical}. 

SBL is the most prominent Type-II method, introduced for the single measurement vector (SMV) setting  in~\cite{tipping2001sparse,wipf2004sparse} and extended to MMV as
M-SBL~\cite{wipf2007empirical}. The traditional expectation-maximization (EM) solver for M-SBL is known to converge slowly, especially in large-scale problems. In addition, with strictly
positive initialization, EM's multiplicative updates do not produce exact zeros. These drawbacks have motivated faster or more flexible Type-II variants, including MM-based covariance learning algorithm (CLMM) \cite{esfandiari2026mm}, generalized approximate message passing (GAMP)-based implementations with linear per-iteration complexity \cite{al2017gamp}, variational methods for jointly learning signal powers and full-structure noise covariance matrices~\cite{hashemi2022joint}. 

Greedy methods are attractive for their simplicity and low complexity. The greedy principle has recently been adopted in the covariance domain:  covariance learning matching pursuit (CL-MP)~\cite{marata2025activity}, independently developed as light-weight sequential SBL (LWS-SBL)~\cite{pote2025theory}, sequentially selects atoms by optimizing a Type-II (sequential) likelihood and stops after $K$ iterations. Although effective at low sparsity levels, greedy methods can degrade at higher
sparsity levels due to accumulated support-selection errors.

Covariance learning has also emerged as a powerful tool for activity detection (AD) in grant-free random access (GFRA), a key massive machine-type communication (mMTC) application~\cite{liu2018sparse}. Due to sporadic traffic, the received pilots follow an MMV model with a row-sparse channel matrix, making AD a support-recovery problem. Although many works use compressive sensing (CS) formulations with AMP-type solvers \cite{liu2018massive,chen2018sparse}, Type-II/CL methods can detect more active devices with the same pilot length when sufficiently many base station (BS) antennas are available~\cite{fengler2021non}. Here the CL variables represent device powers, namely activity indicators scaled by large-scale fading and transmit powers. Since EM-based M-SBL, coordinate-wise optimization (CWO)~\cite{fengler2021non}, and CL-MP/LWS-SBL \cite{marata2025activity,pote2025theory} optimize the same Type-II likelihood, they provide natural benchmarks for the proposed algorithms. Further related developments include joint activity and data detection \cite{chen2019covariance}, phase-transition analysis~\cite{chen2021phase}, robust CL-based AD~\cite{wang2025robust}, and SCA-based ML/MAP detectors \cite{tan2025fast}. 

\subsection{Contributions}

Our contributions are summarized as follows.

1) We propose a parallel SCA algorithm (CL-SCA) for covariance learning that preserves the convex part of the surrogate objective and linearizes only the concave part. This yields $N$ convex per-coordinate subproblems whose exact minimizers are  given in closed-form at each iteration (Theorem~\ref{th:SCA}) and can be updated in parallel. The per-iteration computational complexity is $\mathcal{O}(NM^2)$, matching one cycle of CWO \cite[Alg.~1]{fengler2021non}, and one iteration of CL-MP \cite[Alg.~1]{marata2025activity}, M-SBL \cite{wipf2007empirical} or CLMM \cite{esfandiari2026mm}.  

2) We derive a fixed-point form of the KKT conditions (Lemma~\ref{lem:fixed_point}) and derive CL-FP algorithm as a damped fixed-point iteration (obtained earlier as PSCA-ML-UD  \cite{tan2025fast} from a different surrogate) and give the first rate and step-size analysis. Its per-iteration cost equals that of CL-SCA.

3) We provide a rigorous convergence analysis showing that the fixed points of both the CL-SCA and CL-FP update maps coincide exactly with the KKT points of the CL problem. In addition, each update direction satisfies a quantitative descent inequality, and the damped iterates achieve sufficient decrease and an $\mathcal{O}(T^{-1/2})$ best-iterate stationarity rate after $T$ iterations. Moreover, when the damping factors are bounded away from zero, the residual vanishes along the entire sequence, and every limit point is a KKT point ({\it cf.} Theorems~\ref{th:g_rate}~and~\ref{th:fp_rate}). 

4) We then propose a hybrid algorithm (CL-Hyb) exploiting the benefits of both mappings. Local analysis shows that CL-SCA allows twice the step size of CL-FP, whereas CL-FP is faster near stationarity. This motivates CL-Hyb, which uses CL-SCA for
the first few iterations and then switches to CL-FP. Since both updates satisfy the same sufficient-decrease inequality with the same constants, the convergence guarantees carry over to the hybrid scheme. 

5) We provide a unified fixed-point view of CL-SCA, CL-FP, CLMM, and EM as updates $\gamma_n\mapsto\gamma_n+(\text{multiplier})\times(\text{residual})$
(Table~\ref{tab:update_family}), showing that the fixed points of the first three are exactly the KKT points, whereas EM's vanishing multiplier at $\gamma_n=0$ creates spurious fixed points on the boundary.

6) We apply the proposed methods to joint activity detection and channel
estimation (JADCE) in GFRA and compare them with M-SBL~\cite{wipf2007empirical},
CLMM~\cite{esfandiari2026mm}, CWO~\cite{fengler2021non}, and CL-MP~\cite{marata2025activity}. CL-Hyb with backtracking (CL-Hyb-bt) attains the missed-detection performance of CWO at roughly half its runtime  and is more than an order of magnitude faster than M-SBL (cf.\ \autoref{fig:CPU}). The objective-gap analysis in
\autoref{fig:loss_curve} further shows that CL-Hyb-bt converges fastest among the optimization-based methods, in both iterations and runtime.

For reproducible research, code for all proposed and benchmark methods is available at \url{https://github.com/esollila/CL-SCA}. Preliminary results appeared
in~\cite{ollila2026joint,esfandiari2026mm}.

\subsection{Organization and Notation}

 The paper is structured as follows.  \autoref{sec:1st_order} gives the 1st-order and fixed-point characterizations of the CL problem and relates them to CWO. \autoref{sec:MSBL}  reviews M-SBL and its EM solver. Section~\ref{CL-SCA-AL_SEC} develops the proposed CL-SCA algorithm and its stationarity-rate analysis, and discusses step-size selection, including a backtracking rule that attains the $\mathcal O(T^{-1/2})$ rate without step-size tuning. Section~\ref{sec:CL-FP} develops CL-FP and its rate analysis, and casts CL-SCA, CL-FP, CLMM, and EM in a unified fixed-point view in \autoref{subsec:mm_vs_sca}. The local convergence analysis in \autoref{subsec:convergence} then motivates the third algorithm, CL-Hyb. Section~\ref{sec:source}  applies the methods to JADCE in GFRA, and Section~\ref{sec:concl}  concludes the paper.

We use bold lowercase and uppercase letters for vectors and matrices,  respectively. For a matrix $\X, \X^\hop$, $\tr(\X)$, $|\X|$, and $\|\X\|_{2}$ denote its Hermitian transpose, trace, determinant, and
spectral norm. The entry in the $m^{\rm th}$ row and $n^{\rm th}$ column of $\X$ is denoted by $[\X]_{mn}$, and its largest eigenvalue is denoted by $\lambda_{\max}(\X)$. The Euclidean norm of a vector $\x$ is denoted by $\|\x\|_{2}$, and $[\x]_{+}$ denotes the element-wise projection of $\x$ onto the nonnegative orthant. The symbol $\odot$ denotes the Hadamard (element-wise) product.  $\mathsf{1}\{\cdot\}$ denotes the indicator function.

\section{First order characterization of the problem} \label{sec:1st_order}

Write 
\begin{align} 
\M_{\backslash n} &= \sum_{j \neq n}  \gamma_j   \a_j \a_j^\hop + \sigmaw \mathbf{I}  = \M - \gamma_n \a_n \a_n^{\hop} \label{eq:Mbackslash}  
\end{align}
as the covariance matrix of  $\y_t$-s when the contribution from the $n^{\text{th}}$ signal is removed. Define  
\beq \label{eq:b}
\tildeb_n =  \M^{-1}_{\backslash n}  \a_n \qquad \mbox{and}  \qquad  \b_n =  \M^{-1}  \a_n, 
\eeq  
for $n=1,\ldots, N$. Note that $\tildeb_n\equiv\tildeb_n(\gam)$ and $\b_n\equiv\b_n(\gam)$  depend on $\gam$ through $\M$ but to keep notations short,  we have dropped this notation.  
Write $\nabla_n = \frac{\partial}{\partial \gamma_n}$ as a shorthand notation. Then   the coordinate derivatives $ d_n(\gam)=\nabla_{n}\ffnc(\gam)$ and $e_n(\gam) =  - \nabla_{n} \gfnc(\gam)$ of functions $f(\gam)$ and $g(\gam)$ are 
\begin{align}
 d_n(\gam) 
&= \nabla_n \log|\M |
  =   \a_n^\hop   \M^{-1} \a_n  
  = \a_n^\hop \b_n  \label{eq:partial_gammai_ncvx}  \\
e_n(\gam) 
&=  -\!   \nabla_n \tr(\M^{-1} \S) \!=\! \a_n^\hop \M^{-1}\mathbf{S} \M^{-1}\a_n 
 =\b_n^\hop  \mathbf{S} \b_n
   \label{eq:partial_gammai_ncvx2}
\end{align} 
and note that $d_n(\gam)>0$ as $\M \succ 0$ and $e_n(\gam) \geq 0$ for all $\gam$. For notational simplicity we often write $d_n\equiv d_n(\gam)$ and $e_n\equiv e_n(\gam)$, dropping the dependency of $\gam$.  
With the above definitions,  the $n^{\text{th}}$ coordinate function of the gradient of $\ell(\gam)=f(\gam) + g(\gam)$ can be expressed as:
\beq \label{eq:grad_ell}
\nabla_n \ell(\gam)
= \a_n^\hop \b_n -  \b_n^\hop \S \b_n = d_n- e_n
\eeq
The KKT condition for the constrained case $\gam \geq \mathbf{0}$ is  
\beq \label{eq:KKT}
\nabla_n \ell(\gam)  \begin{cases} = 0, & \mbox{if $\gamma_n >0$} \\
\geq 0, & \mbox{if $\gamma_n = 0$}  \end{cases}.
\eeq   
Next result characterizes the solution to \eqref{eq:orig_formulation}. 

\begin{lemma} \label{lem:fixed_point}
Let $\gam\in\mathbb{R}^N_{\ge0}$ and let $H(\gam)$ have coordinate
functions
 \begin{align}
 H_n(\gam)
&= \gamma_n +  \frac{1}{d_n}\bigg(\frac{e_n }{d_n }  - 1 \bigg)   \label{eq:gamma_i_sol2} \\
&= \gamma_n-\frac{\nabla_n\ell(\gam)}{d_n^2}. \label{eq:H_scaled_gradient}
\end{align}
Then $\gam$ satisfies the KKT conditions \eqref{eq:KKT} if and only if it verifies the
fixed-point equation
\beq \label{eq:lik_ver2}
\gam =  [H(\gam)]_+ .
\eeq
In particular, every local minimizer of $\ell$ over
$\mathbb{R}^N_{\ge0}$, and hence the minimizer
$\widehat{\gam}$, verifies \eqref{eq:lik_ver2}, and on the support
$\{n:\gamma_n>0\}$ the projection is inactive, so
$\gamma_n=H_n(\gam)$.
\end{lemma}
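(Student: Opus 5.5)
First I will check that the two expressions for $H_n$ agree. Since $d_n=\a_n^\hop\M^{-1}\a_n>0$ (because $\M\succ 0$ and $\a_n\neq\mathbf 0$), the second expression is well defined. Expanding \eqref{eq:gamma_i_sol2} gives $\frac{1}{d_n}\big(\frac{e_n}{d_n}-1\big)=\frac{e_n-d_n}{d_n^2}$. By \eqref{eq:grad_ell} this equals $-\nabla_n\ell(\gam)/d_n^2$, which is \eqref{eq:H_scaled_gradient}. The lemma then reduces to a coordinatewise statement about the scalar map $\gamma_n\mapsto[\gamma_n-c_n\nabla_n\ell(\gam)]_+$ with $c_n=1/d_n^2>0$.

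Next I will show that the fixed-point equation and the KKT conditions are equivalent, one coordinate at a time, by splitting into two cases.
\begin{itemize}
\item If $\gamma_n>0$, then $\gamma_n=[\gamma_n-c_n\nabla_n\ell]_+$ forces the argument to be positive and equal to $\gamma_n$. Hence $c_n\nabla_n\ell=0$, that is, $\nabla_n\ell=0$. Conversely, $\nabla_n\ell=0$ gives $[\gamma_n]_+=\gamma_n$.
\item If $\gamma_n=0$, then $0=[-c_n\nabla_n\ell]_+$ holds iff $-c_n\nabla_n\ell\le 0$. Since $c_n>0$, this is iff $\nabla_n\ell\ge0$.
\end{itemize}
Together the two cases give exactly \eqref{eq:KKT}. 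The first case also shows that on the support $\gamma_n=H_n(\gam)$, i.e., the projection is inactive there.

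For the statement about local minimizers, I will use the standard first-order necessary condition over the convex set $\mathbb{R}^N_{\ge0}$: at a local minimizer, $\nabla\ell(\gam)^\top(\bo z-\gam)\ge0$ for all $\bo z\ge\mathbf 0$. Choosing $\bo z=\gam\pm\epsilon\e_n$, the minus sign being allowed when $\gamma_n>0$, yields \eqref{eq:KKT}. This uses that $\ell=f+g$ is continuously differentiable on the closed orthant, which holds because $\M\succeq\sigma^2\I\succ0$ there.

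The global minimizer $\widehat{\gam}$ is a local minimizer, provided it exists. I expect this existence question to be the main subtle point. To settle it, I would note that $\ell(\gam)\ge f(\gam)\to\infty$ as $\|\gam\|\to\infty$, since $\log|\M|$ grows while $g\ge0$, so $\ell$ is coercive on the closed orthant and a minimizer exists. Apart from this, the proof is routine sign bookkeeping, with $d_n>0$ as the only fact needed.
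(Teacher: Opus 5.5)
Your proposal is correct and follows the paper's proof: the equivalence is the same coordinatewise case split on $\gamma_n>0$ versus $\gamma_n=0$, and it rests only on $d_n>0$. The paper additionally spends most of its proof deriving \eqref{eq:gamma_i_sol2} from the estimating equation of Yardibi et al.\ via Sherman--Morrison, which links $H_n$ to the CWO update but is not needed here because the lemma takes \eqref{eq:gamma_i_sol2} as the definition; in turn, you make explicit two steps the paper leaves implicit, namely that local minimizers satisfy KKT and that $\widehat{\gam}$ exists.
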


\begin{proof} See Appendix~\ref{app:lem:fixed_point}.
\end{proof}

The  CWO algorithm is closely related with fixed point map. Namely, in cycle $k$, coordinate $n$ is updated by minimizing
\begin{align*} 
\ell_{\text{cw},n}(\gamma)= \ell( \gamma_1^{k}, \ldots,  \gamma_{n-1}^{k}, \gamma, \gamma_{n+1}^{k-1}, \ldots, \gamma_N^{k-1})
\end{align*}
subject to $\gamma \geq 0$. The minimizer is   \cite{yardibi2010source}: 
\begin{align} \label{eq:gamma_i_star}
\gamma_n^k  \!&=\!  \arg \min_{\gamma \geq 0}  \ \ell_{\text{cw},n}(\gamma)
= [H_n(\tilde{\gam}^{\iter})]_+
\end{align} 
where $\tilde{\gam}^{\iter} = \left(\gamma_1^{k}, \ldots,  \gamma_{n-1}^{k}, \gamma_n^{k-1} , \gamma_{n+1}^{k-1},  \ldots,\gamma_N^{k-1} \right)^\top.$

\section{M-SBL} \label{sec:MSBL}

 Under the Gaussianity assumption, the posterior distribution of the sources factorizes across measurement vectors as
$$
p(\X \mid \Y, \gam)
= \prod_{t=1}^{Q} \phi\!\big(\x_t;\, \bmu_t, \M_{x}\big).$$
The posterior depends on   the conditional mean vectors  $\bmu_t \in \mathbb{C}^N$ of $\x_t$-s as well as  their joint posterior covariance matrix, $\M_x \in \mathbb{C}^{N \times N}$, defined by 
\begin{align*} 
&\mathbf{M}(\gam)=\E[\X \mid \Y, \gam]
= (\bmu_1 \, \cdots \, \bmu_Q)  
= \Gam \A^\hop \M^{-1} \Y, \\ 
&\M_x(\gam) = \cov(\x_t \mid \y_t, \gam)=  \Gam - \Gam \A^\hop \M^{-1} \A \Gam,
\end{align*} 
where $\M$ is defined by \eqref{eq:M}.
The M\text{-}SBL algorithm~\cite{wipf2007empirical,wipf2007beamforming} employs an empirical Bayes (Type-II) approach to estimate the posterior distribution. In the Type-II framework, rather than placing a prior on the hyperparameters~$\gam$, one estimates them directly from the data by minimizing the marginal Type-II scaled negative LLF in \eqref{eq:conv+nonconv}, thereby obtaining $\hat{\gam}$. Subsequently, the posterior distribution is computed via the plug-in principle as \(p(\X \mid \Y, \hat{\gam})\), and
\begin{align*}
\hat{\X} &= \mathbf{M}(\hat \gam)= \hat{\Gam} \A^\hop \hat{\M}^{-1} \Y, 
\\
\hat{\M}_x &= \hat{\M}_x(\hat \gam)=\hat{\Gam} - \hat{\Gam}\,\A^\hop \hat{\M}^{-1} \A \hat{\Gam},
\end{align*}
will serve as the empirical Bayes estimates of the posterior mean vectors and covariance matrix, respectively, where  $\hat{\Gam} = \diag(\hat{\gam})$ and $\hat{\M} = \A \hat{\Gam} \A^{\hop} + \sigma^2 \mathbf{I}$.  Algorithm~\ref{alg:SBL-JADCE} summarizes the \emph{generic} M-SBL framework: step~1, solved by EM in \cite{wipf2007empirical}, can use any solver of \eqref{eq:orig_formulation}, such as the proposed SCA algorithms.

\subsection{The EM-algorithm}

In step~1 of Algorithm~\ref{alg:SBL-JADCE}, the M-SBL constructs an EM algorithm to minimize $\ell(\gam)$ by treating $(\y_t,\x_t)$, $t=1,\ldots,\ns$, as the complete data.  It is well-known that the EM algorithm suffers from slow convergence, making M-SBL  impractical for applications where $N$ is large. 

\begin{algorithm}[!t]
\caption{Generic  M\text{-}SBL framework}\label{alg:SBL-JADCE}
\DontPrintSemicolon
\SetKwInOut{Input}{Input} 
\SetKwInOut{Output}{Output}
\SetKwInOut{Init}{Initialize}
\SetNlSkip{1em}
\SetInd{0.5em}{0.5em}
\Input{$\mathbf{Y}$, $\A$, $\sigma^2$, $K$ or $\gamma_{\text{thr}}$.}

Minimize $\ell(\gam)$ over $\gam \geq 0$ to obtain $\hat \gam \in \mathbb{R}^N_{\geq 0}$ using an algorithm of your choice. 

Compute estimate of activity indices $\hat{\boldsymbol{\alpha}}\! = \!(\hat\alpha_1,\ldots,\hat\alpha_N)^{\top} \!\in \! \{0,1\}^N$ using either the threshold rule in  \eqref{eq:threshold_rule} or the $K$-top rule in \eqref{eq:topKrule}.   

Prune the signal power estimates as  $\hat{\gam}  \gets \hat{\gam} \odot \hat{\boldsymbol{\alpha}}$.  

Compute the sparse signal matrix as  
$
\hat{\X} = \hat{\Gam} \A^\hop \hat{\M}^{-1} \Y, 
$
with $\hat{\Gam} = \diag(\hat{\gam})$ and $\hat{\M} = \A \hat{\Gam} \A^{\hop} + \sigma^2 \mathbf{I}$. 

 \Output{$\hat{\boldsymbol{\alpha}}$ (or support $\hat{\mathcal M} = \supp(\hat{\boldsymbol{\alpha}})$), $\hat \X$, $\hat{\gam}$.} 
\end{algorithm}

To highlight the analogy with the SCA algorithms developed in this work, we combine the E- and M-steps of M-SBL into a concise efficient form that avoids computing the posterior mean and covariance estimates at each iteration. Specifically, 
let $\gam^k$ denote the value of $\gam$ at $k^{\text{th}}$ iteration, and write 
\beq \label{eq:dn_and_en}
d_n^k=d_n(\gam^k)= \a_n^\hop \b_n^k \ \mbox{and} \ e_n^k = e_n(\gam^k) = [\b_n^k]^\hop \S \b_n^k
\eeq
i.e., as values the coordinate derivative functions defined in \eqref{eq:partial_gammai_ncvx} and \eqref{eq:partial_gammai_ncvx2} evaluated at $\gam = \gam^k$,  respectively. Also, write $\mathbf{M}^k = \mathbf{M}(\gam^k)$ and $\M_x^k=\M_x(\gam^k)$ for the values of the posterior mean and covariance matrix at $\gam=\gam^k$.  With the above definitions in mind, 
we can write      
$[\M_x^k]_{nn} = \gamma_n^k  - (\gamma_n^k)^2 d_n^k $ and 
$\big[ Q^{-1}   \mathbf{M}^{\iter} ( \mathbf{M}^{\iter})^\hop \big]_{nn}   = (\gamma_n^k)^2e_n^k$, and the M-step \cite[eq. (18)]{wipf2007empirical} can be expressed in a simpler form as 
\beq \label{eq:MSBL_update}
\gamma_n^{k+1} = \gamma_n^k  - ( \gamma_n^k)^2 ( d_n^k -  e_n^k )  
= \gamma_n^k(1  - \gamma_n^k( d_n^k -  e_n^k )  ). 
\eeq
Thus, the M-SBL algorithm consists of iterating \eqref{eq:MSBL_update} until convergence. Note that due to multiplicative form of the update, one needs to start from some strictly positive initial point (e.g., $\gam^0 = \mathbf{1}$ is proposed in \cite{wipf2007empirical}).  Since $\nabla_n \ell(\gam^k)  
=d_n^k - e_n^k$ by \eqref{eq:grad_ell},  we can interpret \eqref{eq:MSBL_update}  as a gradient descent method with coordinate-wise adaptive step size $(\gamma_n^k)^2$.

Apart from its slow convergence, another limitation of the EM iterations \eqref{eq:MSBL_update} is that they never produce an exactly sparse solution. Indeed, the EM minimization requires strictly positive hyperparameters, since the objective function in the E-step is defined strictly in the domain $\gam >0$  which matches the natural parameter space of the underlying probabilistic model, so the solution in M-step is strictly positive. If $\gamma_n^0>0$ (i.e., $n^{\text{th}}$ component of the initial start is strictly positive), then the M-step yields $\gamma_n^{k+1}>0$ for all $k$. Conversely, if $\gamma_n^0=0$, then $\gamma_n^k=0$ for all $k$, i.e., once a component is initialized at zero, it remains zero throughout the EM iterations. Consequently, under positive initialization, M-SBL returns iterates that are dense for every finite $k$, although its components may become arbitrarily small as the iterations proceed.

\subsection{Sparsifying the solution} 

The (local) minimizer $\hat \gam$ of $\ell(\gam)$ found by any of the CL methods (CWO, EM of M-SBL, or the proposed CL-SCA, CL-FP, or CL-Hyb) may not be sufficiently sparse.  For this reason, in Step~3 of Algorithm~\ref{alg:SBL-JADCE}, one prunes small signal powers to zero via $\hat{\gam}  \gets \hat{\gam} \odot \hat{\boldsymbol{\alpha}}$, where the activity indices $\hat{\boldsymbol{\alpha}}$ are obtained using either of the following rules: 
\begin{itemize}
    \item The \emph{threshold-rule}:  
\beq \label{eq:threshold_rule}
\hat{\alpha}_n = \mathsf{1}{\{\hat \gamma_n \geq \gamma_{\text{thr}} \}} = \begin{cases} 1, & \text{if $\hat \gamma_n \geq \gamma_{\text{thr}}$ } \\ 0 & \text{otherwise} \end{cases} 
\eeq
where $\gamma_{\text{thr}}$ is a fixed threshold.  
\item  The \emph{top $K$-rule}:
\beq \label{eq:topKrule}
\hat \alpha_n = \mathsf{1}\!\left\{ \gamma_n \text{ is among the $K$ largest entries of } \hat{\gam} \right\}
\eeq
where the sparsity level $K$ is assumed to be known.
\end{itemize}
The latter requires knowledge (or an estimate) of  $K= | \mathcal M|$ while the former requires calibrating a suitable threshold, either using a priori knowledge or via estimation. The threshold can be designed such that a desired trade-off between the probabilities of missed detection and false alarm is achieved.

\section{CL-SCA algorithm}  \label{CL-SCA-AL_SEC}

Under the  SCA framework, each agent in the optimization focuses on a single variable, $\gamma_1, \gamma_2, \dots, \gamma_N$, and the resulting updates are performed iteratively and in parallel.

\subsection{CL-SCA algorithm} \label{3_1}

At iteration $k+1$, by following the SCA framework and linearizing the concave term $f(\gam) = \log |\M|$, the $N$ functions  to be minimized by the $N$ agents in parallel  are given as
\beq \label{eq:tilde_f}
\begin{aligned}
\ell_n(\gamma_n \mid \gam^{\iter}) 
= \gfnc\big(\gamma_1^k, \ldots, \gamma_{n-1}^k,\gamma_n, \gamma_{n+1}^k, \ldots, \gamma_N^k \big) 
\\ \qquad + d_n^k \,(\gamma_n - \gamma_n^{\iter}), \ n \in \{ 1, \ldots, N\}, 
\end{aligned}
\eeq 
where $d_n^k$ is defined in \eqref{eq:dn_and_en}. 
The solution is given below. 

\begin{theorem}  \label{th:SCA} For $n \in \{1,\ldots, N\}$, the minimizer of the  convex function $\ell_n(\gamma_n \mid \gam^{\iter})$  over $\gamma_n \geq 0$ is 
\begin{align}  
\hat{\gamma}_n(\gam^{k}) 
&= \left[ 
\gamma_n^{\iter} + \frac{1}{d_n^k} \Bigg( \sqrt{\frac{e_n^k}{d_n^k}} - 1 \Bigg) 
\right]_+ \label{eq:SCA_optim_gammai} 
\end{align} 
 where $d_n^k$ and $e_n^k$ are defined in \eqref{eq:dn_and_en}.
\end{theorem}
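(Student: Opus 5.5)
Write $\gamma_n = \gamma_n^k + \delta$, so that $\gamma_n \ge 0$ corresponds to $\delta \ge -\gamma_n^k$. The plan is to reduce $\ell_n(\gamma_n\mid\gam^k)$ to an explicit scalar function of $\delta$ using a rank-one (Sherman--Morrison) update of $\M_k^{-1}$, where $\M_k=\M(\gam^k)$. With the other coordinates frozen at $\gam^k$, the covariance is $\M_k + \delta\,\a_n\a_n^\hop$. Put $\b_n^k=\M_k^{-1}\a_n$, $d_n^k=\a_n^\hop\b_n^k$ and $e_n^k=[\b_n^k]^\hop\S\b_n^k$. Then, whenever $1+\delta d_n^k>0$,
\[
(\M_k+\delta\a_n\a_n^\hop)^{-1}=\M_k^{-1}-\frac{\delta\,\b_n^k[\b_n^k]^\hop}{1+\delta d_n^k},
\]
and therefore
\[
\ell_n = \tr(\M_k^{-1}\S) - \frac{\delta\, e_n^k}{1+\delta d_n^k} + d_n^k\delta .
\]
The feasible range $\delta\ge-\gamma_n^k$ is inside $1+\delta d_n^k>0$. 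To see this, note $\M_k+\delta\a_n\a_n^\hop=\M_{\backslash n}+\gamma_n\a_n\a_n^\hop\succ 0$ for $\gamma_n\ge 0$, and $1+\delta d_n^k=|\M_k+\delta\a_n\a_n^\hop|/|\M_k|$ by the matrix determinant lemma. So the formula holds on the entire feasible set.

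Next I simplify with $u=1+\delta d_n^k>0$. Using $\delta/(1+\delta d_n^k)=(1-1/u)/d_n^k$, and dropping constants, the objective becomes
\[
\phi(u)=\frac{e_n^k}{d_n^k}\cdot\frac1u + u .
\]
This is convex on $u>0$, which also confirms the convexity claimed in the theorem; the convexity of $g$ in $\gamma_n$ is consistent with this.

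If $e_n^k>0$, the unconstrained minimizer is $u^\star=\sqrt{e_n^k/d_n^k}$, i.e.
\[
\delta^\star=\frac{1}{d_n^k}\Bigl(\sqrt{e_n^k/d_n^k}-1\Bigr).
\]
The constraint $\gamma_n\ge0$ becomes $u\ge u_0:=1-\gamma_n^k d_n^k$. Since $\phi$ is a convex univariate function, the constrained minimizer is $\max(u^\star,u_0)$, which in the $\gamma$-variable is exactly $[\gamma_n^k+\delta^\star]_+$, giving \eqref{eq:SCA_optim_gammai}. The degenerate case $e_n^k=0$ gives $\phi$ increasing, so the minimizer is at the boundary $\gamma_n=0$. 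The formula also yields $0$ here, because $\gamma_n^k-1/d_n^k<0$. That inequality holds since $\gamma_n^k d_n^k=\gamma_n^k\a_n^\hop\M_k^{-1}\a_n<1$, which follows from $\M_k\succ\gamma_n^k\a_n\a_n^\hop$ together with $\sigma^2>0$.

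The main obstacle is the bookkeeping in the change of variables. It must be checked that the map $\delta\mapsto u$ is increasing, so that the projection onto $u\ge u_0$ corresponds to projection onto $\gamma_n\ge0$. It must also be checked that $u_0$ may be positive, zero or negative, yet the feasible set is always $\{u\ge\max(u_0,0^+)\}\cap\{u>0\}$; since $\gamma_n^k d_n^k<1$ gives $u_0>0$, this is in fact just $u\ge u_0$. I would verify this bound, $\gamma_n^k d_n^k<1$, explicitly via Sherman--Morrison applied to $\M_k=\M_{\backslash n}+\gamma_n^k\a_n\a_n^\hop$, which gives $d_n^k=\tilde d/(1+\gamma_n^k\tilde d)$ with $\tilde d=\a_n^\hop\M_{\backslash n}^{-1}\a_n$.
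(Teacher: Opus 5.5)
Your proof is correct, and it takes a more direct route than the paper's. The paper applies Sherman--Morrison around the leave-one-out matrix $\boldsymbol{\Sigma}_{\backslash n}$ and writes the surrogate as $-\gamma_n\mathbf{c}_n^{\mathsf H}\mathbf{S}\mathbf{c}_n/(1+\gamma_n\mathbf{a}_n^{\mathsf H}\mathbf{c}_n)+\gamma_n d_n^k$. It then solves the stationarity equation for a root expressed through $\mathbf{c}_n^k$, and only afterwards converts that root into $d_n^k,e_n^k$ via the identities \eqref{eq:gamma_i_sol1_apu_1}, which rest on $\mathbf{c}_n\propto\mathbf{b}_n$.

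You instead center the rank-one update at the current covariance $\boldsymbol{\Sigma}_k$ and work with the displacement $\delta=\gamma_n-\gamma_n^k$. This way $d_n^k$ and $e_n^k$ appear at once, and no leave-one-out inverses or conversion step are needed. The affine, increasing substitution $u=1+\delta d_n^k$ then reduces the problem to minimizing $(e_n^k/d_n^k)/u+u$ over $u\ge 1-\gamma_n^k d_n^k$. That makes the convexity, the unconstrained minimizer $u^\star=\sqrt{e_n^k/d_n^k}$, and the equivalence of the two projections immediate.

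The price of your centering is that $1+\delta d_n^k$ can vanish for $\delta<0$, so you need $\gamma_n^k d_n^k<1$. You verify this correctly; it is exactly the paper's identity $1/(\mathbf{a}_n^{\mathsf H}\mathbf{c}_n)=1/d_n-\gamma_n$ read as a positivity statement. In the paper's parametrization, by contrast, $1+\gamma_n\mathbf{a}_n^{\mathsf H}\mathbf{c}_n>0$ holds automatically for $\gamma_n\ge0$.

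In exchange, your argument is tidier at the edges. The restriction to $u>0$ makes explicit the branch choice behind the paper's ``single root''. You also treat $e_n^k=0$, where the stationarity equation has no root at all, although the paper's projection step still returns $0$. The paper's route, for its part, ties the update to the CWO/Yardibi estimating equation and yields the relation $1/d_n=\gamma_n+1/(\mathbf{a}_n^{\mathsf H}\mathbf{c}_n)$, which is reused later in the CLMM comparison.
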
 

\begin{proof} See Appendix~\ref{app:th:SCA}.
\end{proof} 

The update \eqref{eq:SCA_optim_gammai} coincides with the one derived in the conference version \cite[Th.~1]{ollila2026joint} of this work, 
but it is also closely related to the majorization-minimization (MM)-based covariance learning algorithm (CLMM) of \cite{esfandiari2026mm}: the two were obtained independently and look dissimilar, but they move each coordinate in the same direction and differ only in one scalar multiplier.  We defer this discussion to subsection~\ref{subsec:mm_vs_sca}.

Once  solutions in \eqref{eq:SCA_optim_gammai}  are computed for each signal power  in parallel,  denoted $\hat \gam(\gam^k) = (\hat \gamma_1(\gam^k), \ldots, \hat \gamma_N(\gam^k))^\top$, the next iterates of the algorithm are formed by including a smoothing (damping) operation with the previous ones, which is necessary to
guarantee convergence, as
\begin{align}
\gam^{\iter+1} &= \big(1 - \eta^k \big)\gam^{\iter} + \eta^k\,\hat{\gam}\left(\gam^k\right) 
=  \gam^{\iter} + \eta^k\, \mathbf{p}^k  \label{eq:SCA_smoothing2} 
\end{align}
where the step-size $\eta^k \in (0,1]$ is introduced to control the length of the update along
the update direction 
\beq \label{eq:g_direction}
\mathbf{p}^k = \hat{\gam}(\gam^k) - \gam^k.
\eeq
The step-size sequence $\{ \eta^k \}$   can be conveniently chosen as a diminishing sequence, verifying 
 $\sum_{k=0}^{\infty} \eta^k = \infty$ and $\sum_{k=0}^\infty (\eta^k)^2 < + \infty$ \cite{scutari2013decomposition,scutari2018parallel}. 
 However, our convergence analysis suggests a backtracking algorithm that certifies the best-iterate stationary rate. 
 We use $\mathbf{\gam}^0 = \mathbf{0}$ as default initial start. The CL-SCA algorithm is tabulated in \autoref{alg:SCA}.   

\begin{algorithm}[t]
 \caption{\textsf{CL-SCA} algorithm}\label{alg:SCA}
\DontPrintSemicolon
\SetKwInOut{Input}{Input} 
\SetKwInOut{Output}{Output}
\SetKwInOut{Init}{Initialize}
\SetNlSkip{1em}
\SetInd{0.5em}{0.5em}
\Input{$\mathbf{S}$, $\A$, $\sigma^2$, step-size sequence $\{\eta^k\}$, $I_{\max}$.} 
\Init{$\gam^0=\mathbf{0}$, $\delta=10^{-3}$, $k=0$.}

\nonl \Repeat{ $\bigl\|\gam^k -[\gam^k -  \nabla \ell(\gam^k)]_+\bigr\|_\infty < \delta$ \KwSty{or}  $ k \ge I_{\max}$}{ 
   
 \BlankLine
Compute $\M^k= \A \diag(\gam^k) \A^{\hop} + \sigma^2 \mathbf{I}$   
and $\B^k= ( \M^k )^{-1} \A = (\b_1^{\iter} \, \cdots \,\b_N^{\iter})$. 
\label{alg:cov_update_step}

For all $n \in \{1,\ldots, N\}$ solve in parallel: 
\[
\hat{\gamma_n}(\gam^{k}) = \left[ 
\gamma_n^{\iter} + \frac{1}{d_n^k} \Bigg( \sqrt{\frac{e_n^k}{d_n^k}} - 1 \Bigg) 
\right]_+
\]
where $d_n^k= \a_n^\hop \b_n^k$ and $e_n^k = [\b_n^k]^\hop \S \b_n^k$.

 \BlankLine
$ 
 \gam^{\iter+1} =\gam^{\iter} + \eta^k\,\mathbf{p}^k$, where $\mathbf{p}^k = \hat{\gam}(\gam^k) - \gam^k$.

$k = k+1$
 
 }
 \Output{$\hat{\boldsymbol{\gamma}}=\gam^{\iter}$.}
\end{algorithm}

\begin{lemma}
\label{lem:g_first_order}
A vector  $\gam\in\mathbb{R}_{\ge0}^N$,
 satisfies the KKT
conditions~\eqref{eq:KKT} if and only if it is a fixed point of
the update map of Theorem~\ref{th:SCA}, i.e.,
$\hat\gam(\gam)=\gam$.
\end{lemma}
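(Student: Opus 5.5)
The plan is a direct coordinatewise check. By Theorem~\ref{th:SCA}, $\hat\gamma_n(\gam)=[\gamma_n+\frac{1}{d_n}(\sqrt{e_n/d_n}-1)]_+$ with $d_n>0$ and $e_n\ge 0$. So $\hat\gam(\gam)=\gam$ holds if and only if, for every $n$, $\gamma_n=[\gamma_n+r_n]_+$, where
\[
r_n=\frac{1}{d_n}\Big(\sqrt{\tfrac{e_n}{d_n}}-1\Big).
\]
The key observation is a sign relation between $r_n$ and the gradient. Since $d_n>0$ and $t\mapsto\sqrt t$ is strictly increasing, $r_n$ has the same sign as $e_n/d_n-1$, and hence the opposite sign of $\nabla_n\ell(\gam)=d_n-e_n$ from \eqref{eq:grad_ell}. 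In particular, $r_n=0$ if and only if $\nabla_n\ell(\gam)=0$, and $r_n\le0$ if and only if $\nabla_n\ell(\gam)\ge0$.

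With this relation, I split into the two cases of \eqref{eq:KKT}.

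\emph{Case $\gamma_n>0$.} Suppose $\gamma_n=[\gamma_n+r_n]_+$. The right-hand side is positive, so the projection is inactive and $\gamma_n=\gamma_n+r_n$, i.e.\ $r_n=0$. This is equivalent to $\nabla_n\ell(\gam)=0$. Conversely, if $\nabla_n\ell(\gam)=0$ then $r_n=0$ and the fixed-point equation holds trivially.

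\emph{Case $\gamma_n=0$.} The condition $0=[r_n]_+$ is equivalent to $r_n\le0$, which by the sign relation is equivalent to $\nabla_n\ell(\gam)\ge0$.

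Combining the two cases over all $n$ shows that $\hat\gam(\gam)=\gam$ holds exactly when \eqref{eq:KKT} holds.

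An alternative route is to reuse Lemma~\ref{lem:fixed_point}: note that $H_n(\gam)-\gamma_n=\frac{1}{d_n}(e_n/d_n-1)$ has the same sign as $r_n$ and vanishes simultaneously with it, and the projected fixed-point condition depends only on these two properties. The direct argument above is self-contained, though, so I would present that one.

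There is no real obstacle. The only point needing care is the degenerate case $e_n=0$ (e.g.\ $\S$ singular). There the square root is still defined, $r_n=-1/d_n<0$, and $\nabla_n\ell(\gam)=d_n>0$, so the sign relation still holds. Positivity of $d_n$ (from $\M\succ0$) is what guarantees that dividing by $d_n$ preserves signs.
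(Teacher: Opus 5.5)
Your proof is correct and uses the same coordinatewise case split as the paper's appendix proof. For $\gamma_n>0$ the projection must be inactive, forcing $e_n=d_n$, and for $\gamma_n=0$ the fixed-point condition reduces to $e_n\le d_n$. One minor point: the paper reserves $r_n$ for the ratio $e_n/d_n$, so use a different symbol for your unprojected displacement to avoid a clash.
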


\begin{proof}
See Appendix~\ref{app:lem:g_first_order}.
\end{proof}

Lemma~\ref{lem:fixed_point} and Lemma~\ref{lem:g_first_order} thus provide \emph{two fixed-point characterizations of stationarity}. Lemma~\ref{lem:fixed_point} and the SCA update map
$\hat\gam(\cdot)$ of Theorem~\ref{th:SCA} characterize the same set:
\begin{equation*}
    \hat\gam(\gam)=\gam
    \ \Leftrightarrow\
    \gam=[H(\gam)]_+
    \ \Leftrightarrow\
    \text{$\gam$ satisfies \eqref{eq:KKT}.} 
\end{equation*}
This is no coincidence: recalling $d_n=\a_n^\hop\b_n$, $e_n=\b_n^\hop\SCM\b_n$, and writting $r_n=e_n/d_n$, the unprojected
displacements of the two maps are
\begin{align}
    \hat\gamma_n(\gam)-\gamma_n&=\frac{\sqrt{r_n}-1}{d_n} \label{eq:SCA_optim_gammai_rn} \\
    H_n(\gam)-\gamma_n &=\frac{r_n-1}{d_n}
    =(1+\sqrt{r_n}\,)\,\frac{\sqrt{r_n}-1}{d_n},\label{eq:FP_optim_gammai_rn} 
\end{align}
thus they differ only by the positive factor $1+\sqrt{r_n}$. In other words, both maps move every coordinate in the same direction determined by the sign
of $r_n-1$, i.e., of $-\nabla_n \ell(\gam) = e_n-d_n$ and vanish  simultaneously. As termination criterion we use 
\begin{equation}    \label{eq:terminate_CL}
    \bigl\|\gam^k -[\gam^k -  \nabla \ell(\gam^k)]_+\bigr\|_\infty < \delta 
\end{equation}
where  $\nabla_n \ell(\gam^k) = d_n^k- e_n^k$ with  some appropriate  convergence threshold $\delta$  (default $\delta = 10^{-3}$). 
Note that \eqref{eq:terminate_CL} vanishes if and only if the KKT conditions~\eqref{eq:KKT} hold for $\gam^k$. In addition, note that \eqref{eq:terminate_CL} is an algorithm independent criterion\footnote{An alternative criterion is 
$\| \mathbf{p}^k \|_\infty \leq \delta$  which also certifies proximity to a KKT point of the original problem~\eqref{eq:orig_formulation}, but is algorithm specific.} 
that we will use also to terminate other competing iterative CL methods (CL-FP, CLMM, M-SBL, CWO). Computing $\nabla \ell(\gam^k)$ does not incur any extra costs, since $d_n^k$ and $e_n^k$ required by $\nabla_n \ell(\gam^k)$ are reused (not recomputed) in Step 2.  This is true for all methods considered. 

The computational complexity of step~1 is  $\mathcal{O}(N M^2) +    \mathcal{O}\!\left(M^{3} + M^{2} N\right)$ while the computational complexity of step 2 is 
$\mathcal{O}(NM^2)$.  Computational complexity of Step 3-4 are  $\mathcal{O}(N)$.   Since $N \geq M$, the complexity  of one iteration of CL-SCA algorithm is $\mathcal{O}(N M^2)$.    This is the same complexity per iteration as CWO \cite[Algorithm~1]{fengler2021non}, CL-MP \cite[Algorithm~1]{marata2025activity}, CLMM (cf. \autoref{subsec:mm_vs_sca}),  or the EM-algorithm \eqref{eq:MSBL_update} of M-SBL.   The input to CL-based approaches is  $\mathbf{S}$ (the SCM) with the computational complexity of $\mathcal{O}\!\left(Q M^2\right)$.  Thus, the overall complexities of CL-SCA, CLMM, EM algorithm of M-SBL and CWO are  $\mathcal{O}(N M^2 I_{\text{iters}} + Q M^2)$, where $I_{\text{iters}}$ is the number of iterations until convergence. Note that CL-MP is a greedy CL-based approach, terminated after $K$ iterations (when $K$ is known), and hence its overall complexity is  $\mathcal{O}(N M^2 K + QM^2)$. Thus CL-MP provides faster computations only in the ideal case when $K$ is known and $K<I_{\text{iters}}$. We provide convergence analysis of the CL-SCA (and CL-FP) algorithm in subsection~\ref{sec:convergence}. 

\subsection{Stationarity rate of CL-SCA}
\label{sec:convergence}

Define  $a_{\min}^2=\min_n\|\a_n\|^2$ and $a_{\max}^2=\max_n\|\a_n\|^2$  and note that both scalars are well-defined strictly positive constants as $\a_n\ne\bm 0$ for all $n$.   We say that $\nabla\ell$ is \emph{Lipschitz continuous} on a set $\mathcal C\subseteq\mathbb R^N_{\ge0}$ with constant $L_\nabla>0$ if 
\begin{equation}\label{eq:lipschitz_def}
  \|\nabla\ell(\gam)-\nabla\ell(\gam')\|_2\;\le\;L_\nabla\,\|\gam-\gam'\|_2
  \quad \forall\,\gam,\gam'\in\mathcal C .
\end{equation}
Since $\M(\gam)\succ\bm0$ on an open set containing $\mathbb R^N_{\ge0}$, $\ell$ is twice continuously differentiable there, and for convex $\mathcal C$ a sufficient condition for
\eqref{eq:lipschitz_def} is $\sup_{\gam\in\mathcal C}\|\nabla^2\ell(\gam)\|_2\le L_\nabla$ (see e.g., \cite[Lemma~1.2.2]{nesterov2004introductory}). Property~\eqref{eq:lipschitz_def} implies the \emph{quadratic upper bound} \cite[Prop. A.24]{bertsekas1999nonlinear}:
\begin{equation}\label{eq:quad_upperbound_L}
  \ell(\gam')\;\le\;\ell(\gam)+\nabla\ell(\gam)^{\top}(\gam'-\gam)
  +\frac{L_\nabla}{2}\|\gam'-\gam\|_2^2 \quad\forall \gam,\gam'\in\mathcal C ,
\end{equation}
which is used  later in \autoref{th:g_rate}. The next lemma collects the two constants that govern the step-size rule. 

\begin{lemma}\label{lem:g_constants}
Let  
 $$\mathcal K_0=\{\gam\in\mathbb{R}_{\ge0}^N: \ell(\gam)\le \ell(\gam^0)\}$$
denote the initial sublevel set. Then:
\begin{enumerate}
\item[{\rm (i)}]  $\mathcal K_0$ is compact, and 
$\lambda_{\max}(\M)\le\bar\lambda_0$ with $\bar\lambda_0=\sigma^{-2(M-1)}\exp\{\ell(\gam^0)\}$ 
for all $\gam\in\mathcal K_0$. 
Consequently
\begin{equation}    \label{eq:g_curvature_constant}
    \underline d    =\min_{\gam\in\mathcal K_0}\min_{1\le n\le N} d_n(\gam)    \ge   \frac{a_{\min}^2}{\bar\lambda_0}   > 0
\end{equation}
where $d_n(\gam)=\a_n^\hop \M^{-1} \a_n$. 

\item[{\rm (ii)}]  $\nabla \ell$ is Lipschitz continuous on all of $\mathbb{R}_{\ge0}^N$
with constant
\begin{equation}
    L_\nabla
    =
    \frac{a_{\max}^2\,\|\A\|_2^2}{\sigma^4}
    \left(1+\frac{2\|\SCM\|_2}{\sigma^2}\right).
    \label{eq:g_lipschitz_constant}
\end{equation}
\end{enumerate}
\end{lemma}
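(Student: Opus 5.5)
For (i), I will bound $\ell$ from below by the log-determinant of $\M$ with $g$ discarded, since $g(\gam)=\Tr(\M^{-1}\S)\ge 0$ because $\M^{-1}\succ 0$ and $\S\succeq 0$. Thus $\ell(\gam)\ge \log|\M|$. All eigenvalues of $\M=\A\diag(\gam)\A^\hop+\sigma^2\I$ are at least $\sigma^2$, so
\[
\log|\M|\ge \log\lambda_{\max}(\M)+(M-1)\log\sigma^2 .
\]
On $\mathcal K_0$ this gives $\log\lambda_{\max}(\M)\le \ell(\gam^0)-(M-1)\log\sigma^2$, which is exactly $\lambda_{\max}(\M)\le\bar\lambda_0$.

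Compactness then follows from two facts. First, $\mathcal K_0$ is closed, because $\ell$ is continuous on $\mathbb R^N_{\ge0}$. Second, it is bounded, because
\[
\lambda_{\max}(\M)\ge \a_n^\hop\M\a_n/\|\a_n\|^2\ge \gamma_n\|\a_n\|^2+\sigma^2,
\]
using $\gamma_j\ge0$. This yields $\gamma_n\le \bar\lambda_0/a_{\min}^2$ for every $n$.

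The bound on $\underline d$ follows from the Rayleigh quotient: $d_n=\a_n^\hop\M^{-1}\a_n\ge \|\a_n\|^2/\lambda_{\max}(\M)\ge a_{\min}^2/\bar\lambda_0$. The minimum is attained by compactness and continuity of $d_n$.

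For (ii), I will use the Hessian criterion quoted before the lemma. $\mathbb R^N_{\ge0}$ is convex, so it suffices to bound $\sup\|\nabla^2\ell\|_2$. I will differentiate $d_n=\a_n^\hop\M^{-1}\a_n$ and $e_n=\a_n^\hop\M^{-1}\S\M^{-1}\a_n$ using $\partial_m\M^{-1}=-\M^{-1}\a_m\a_m^\hop\M^{-1}$. This gives
\begin{align*}
\partial_m d_n &= -|\b_n^\hop\a_m|^2 = -|[\B^\hop\A]_{nm}|^2,\\
\partial_m e_n &= -2\,\mathrm{Re}\big[(\b_n^\hop\a_m)(\b_m^\hop\S\b_n)\big].
\end{align*}
So $\nabla^2\ell = -\,\mathbf G\odot\overline{\mathbf G}+2\,\mathrm{Re}\big(\mathbf G\odot \overline{\mathbf F}\big)$ up to conjugation conventions, with $\mathbf G=\A^\hop\M^{-1}\A$ and $\mathbf F=\A^\hop\M^{-1}\S\M^{-1}\A$. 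Both $\mathbf G$ and $\mathbf F$ are PSD.

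For PSD $\mathbf P$ and any $\mathbf Q$, I will use the Schur-product bound $\|\mathbf P\odot\mathbf Q\|_2\le \max_n P_{nn}\,\|\mathbf Q\|_2$. I will apply it with $\mathbf P=\mathbf G$, whose diagonal entries are $d_n\le \|\a_n\|^2/\sigma^2\le a_{\max}^2/\sigma^2$, since $\M^{-1}\preceq\sigma^{-2}\I$. The two norms I need are $\|\mathbf G\|_2\le\|\A\|_2^2/\sigma^2$ and $\|\mathbf F\|_2\le\|\A\|_2^2\|\S\|_2/\sigma^4$. Combining these gives
\[
\|\nabla^2\ell\|_2\le \frac{a_{\max}^2}{\sigma^2}\Big(\frac{\|\A\|_2^2}{\sigma^2}+\frac{2\|\A\|_2^2\|\S\|_2}{\sigma^4}\Big),
\]
which equals $L_\nabla$. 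The bound is uniform in $\gam$, so it holds on all of $\mathbb R^N_{\ge0}$.

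I expect the main obstacle to be the Hadamard-product step: keeping the complex conjugations consistent when writing the Hessian as Hadamard products. I also need the Schur bound for the possibly non-Hermitian factor (the real part of $\mathbf G\odot\overline{\mathbf F}$). I would justify that bound via $\mathbf P\odot\mathbf Q=\mathbf V^\hop(\I\otimes\mathbf Q)\mathbf V$-type factorizations, or the standard inequality $\|\mathbf P\odot\mathbf Q\|_2\le \max_i P_{ii}\|\mathbf Q\|_2$ for $\mathbf P\succeq0$. Taking the real part does not increase the spectral norm.
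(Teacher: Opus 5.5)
Your proposal is correct and follows the paper's proof essentially step for step. Part (i) is identical: drop $\tr(\M^{-1}\S)\ge 0$, use $|\M|\ge\lambda_{\max}(\M)\,\sigma^{2(M-1)}$, then apply the Rayleigh quotient for both boundedness and the bound on $d_n$. Part (ii) uses the same Hessian $2\,\mathrm{Re}(\bm{\Psi}\odot\bm{\Phi}^\ast)-\bm{\Psi}\odot\bm{\Psi}^\ast$, the same triangle inequality and the same Schur-multiplier bound. The only difference is cosmetic. In the cross term you take $\mathbf{G}=\bm{\Psi}$ as the PSD factor and bound $\|\mathbf{F}\|_2\le\|\A\|_2^2\|\S\|_2/\sigma^4$, whereas the paper takes $\bm{\Phi}^\ast$ as the PSD factor with $[\bm{\Phi}]_{nn}\le\|\S\|_2\|\a_n\|^2/\sigma^4$ and uses $\|\bm{\Psi}\|_2\le\|\A\|_2^2/\sigma^2$; both give the same constant. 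Your worry about a non-Hermitian factor is unnecessary, since $\mathbf{G}\odot\overline{\mathbf{F}}$ is Hermitian (indeed PSD).
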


\begin{proof} See Appendix~\ref{app:lem:g_constants}.
\end{proof}

We can now state the sufficient decrease and stationarity rate.  Note that $\ell_{\inf}=\inf_{\gam\ge\bm0} \ell(\gam)>-\infty$ as $\ell(\gam) \geq M \log \sigma^2$.

\begin{theorem} \label{th:g_rate}
If $0<\eta^k\le\bar\eta$ for all $k$, where 
\begin{equation}
    \bar\eta=\min\{1,\underline d^{\,2}/L_\nabla\},
    \label{eq:g_safe_step}
\end{equation}
then the CL-SCA iterates~\eqref{eq:SCA_smoothing2} satisfy $\gam^k\in\mathcal K_0$ and
\begin{equation}  \label{eq:g_sufficient_decrease}
    \ell(\gam^{k+1}) \le \ell(\gam^k)-\frac{\underline d^{2}}{2} \eta^k \|\bm p^k\|_2^2   
\end{equation}
where $\underline d$ and $L_\nabla$ are as in Lemma~\ref{lem:g_constants}. Consequently, for every $T\ge1$,
\begin{equation}
    \min_{0\le k<T}\|\bm p^k\|_2^2
    \le
    \frac{2\{\ell(\gam^0)- \ell_{\inf}\}}
         {\underline d^{\,2}\sum_{k=0}^{T-1}\eta^k}
    \;\overset{(\gam^0=\bm 0)}{\le}\;
    \frac{2\,\tr(\SCM)/\sigma^2}
         {\underline d^{\,2}\sum_{k=0}^{T-1}\eta^k},
    \label{eq:g_rate}
\end{equation}
If, in addition, $\inf_k\eta^k>0$, then $\|\bm p^k\|_2\to0$ and every accumulation
point of $\{\gam^k\}$ is KKT-stationary.
\end{theorem}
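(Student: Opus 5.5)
The plan is to prove a per-iteration \emph{descent inequality} for the direction $\bm p^k$, combine it with the quadratic upper bound \eqref{eq:quad_upperbound_L} to get \eqref{eq:g_sufficient_decrease}, and then telescope. The key coordinatewise claim is that for every $\gam\ge\bm 0$ and every $n$,
\begin{equation*}
\nabla_n\ell(\gam)\,p_n\le -d_n^2\,p_n^2,
\qquad p_n=\hat\gamma_n(\gam)-\gamma_n .
\end{equation*}
To verify it, write $r_n=e_n/d_n$. By \eqref{eq:grad_ell}, $\nabla_n\ell=d_n-e_n=-d_n(\sqrt{r_n}-1)(1+\sqrt{r_n})$. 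Let $u_n=(\sqrt{r_n}-1)/d_n$ be the unprojected displacement from \eqref{eq:SCA_optim_gammai_rn}, so that $\nabla_n\ell=-d_n^2(1+\sqrt{r_n})\,u_n$.

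There are two cases. If the projection is inactive, then $p_n=u_n$ and $\nabla_n\ell\,p_n=-d_n^2(1+\sqrt{r_n})u_n^2\le -d_n^2p_n^2$. If it is active, then $u_n<-\gamma_n\le0$ and $p_n=-\gamma_n$. Hence $p_n$ and $u_n$ have the same sign (or $p_n=0$), and $|p_n|\le|u_n|$. This gives $\nabla_n\ell\,p_n=-d_n^2(1+\sqrt{r_n})|u_n||p_n|\le -d_n^2p_n^2$. Summing over $n$ and using $d_n(\gam)\ge\underline d$ on $\mathcal K_0$ (Lemma~\ref{lem:g_constants}(i)) yields $\nabla\ell(\gam)^\top\bm p\le-\underline d^{\,2}\|\bm p\|_2^2$ for $\gam\in\mathcal K_0$.

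Next I will argue by induction that $\gam^k\in\mathcal K_0$. Assume it holds for $\gam^k$. Since $\eta^k\le1$, $\gam^{k+1}=(1-\eta^k)\gam^k+\eta^k\hat\gam(\gam^k)$ is a convex combination of nonnegative vectors, so $\gam^{k+1}\in\mathbb R^N_{\ge0}$. Because $L_\nabla$ is global on the orthant (Lemma~\ref{lem:g_constants}(ii)), \eqref{eq:quad_upperbound_L} applies with $\gam'=\gam^{k+1}$ and gives
\begin{equation*}
\ell(\gam^{k+1})\le\ell(\gam^k)-\eta^k\underline d^{\,2}\|\bm p^k\|_2^2+\tfrac{L_\nabla}{2}(\eta^k)^2\|\bm p^k\|_2^2 .
\end{equation*}
Using $\eta^k\le\underline d^{\,2}/L_\nabla$ bounds the last term by $\tfrac12\eta^k\underline d^{\,2}\|\bm p^k\|_2^2$, which is \eqref{eq:g_sufficient_decrease}. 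In particular $\ell(\gam^{k+1})\le\ell(\gam^0)$, which closes the induction. The point that needs care is ordering: the lower bound $\underline d$ is only used at $\gam^k$, which is already known to lie in $\mathcal K_0$. So no circularity arises, even though the segment to $\gam^{k+1}$ might a priori leave $\mathcal K_0$; the Lipschitz bound holds on the whole orthant regardless.

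Summing \eqref{eq:g_sufficient_decrease} over $k=0,\dots,T-1$ gives $\tfrac{\underline d^{2}}{2}\sum_{k<T}\eta^k\|\bm p^k\|_2^2\le\ell(\gam^0)-\ell_{\inf}$. Lower-bounding each $\|\bm p^k\|_2^2$ by the minimum over $k<T$ then yields \eqref{eq:g_rate}. For $\gam^0=\bm0$ we have $\ell(\bm0)=\tr(\SCM)/\sigma^2+M\log\sigma^2$ and $\ell_{\inf}\ge M\log\sigma^2$, which gives the explicit bound.

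For the last claim, suppose $\inf_k\eta^k=\underline\eta>0$. Then $\sum_k\|\bm p^k\|_2^2\le 2(\ell(\gam^0)-\ell_{\inf})/(\underline d^{\,2}\underline\eta)<\infty$, so $\bm p^k\to\bm0$. The iterates lie in the compact set $\mathcal K_0$, so accumulation points exist. The map $\gam\mapsto\hat\gam(\gam)-\gam$ is continuous on $\mathbb R^N_{\ge0}$, because $d_n>0$ and $e_n\ge0$ depend continuously on $\gam$ and $[\cdot]_+$ is continuous. Hence any limit point $\gam^\star$ satisfies $\hat\gam(\gam^\star)=\gam^\star$, and Lemma~\ref{lem:g_first_order} shows that $\gam^\star$ is KKT-stationary. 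I expect the main obstacle to be the projected case of the descent inequality, namely checking the sign and magnitude relation between $p_n$ and $u_n$; the rest follows standard descent-lemma arguments.
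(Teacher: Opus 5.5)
Your proposal is correct and follows the paper's proof essentially step for step. The shared structure is: the coordinatewise descent inequality $\nabla\ell(\gam)^\top\bm p\le-\sum_n d_n^2p_n^2$, then the quadratic upper bound with the orthant-wide $L_\nabla$, then induction keeping $\gam^k\in\mathcal K_0$, then telescoping, and finally continuity of the update map plus Lemma~\ref{lem:g_first_order} for accumulation points. The only difference is cosmetic: you verify the coordinatewise inequality by writing $\nabla_n\ell=-d_n^2(1+\sqrt{r_n})u_n$ and comparing $p_n$ with the unprojected step $u_n$, whereas the paper sets $s_n=1+d_np_n$, factors each summand as $(s_n-1)(s_n-r_n)$, and checks that $s_n=\max\{\sqrt{r_n},1-\gamma_nd_n\}$ lies between $r_n$ and $1$.
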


\begin{proof}
See Appendix~\ref{app:th:g_rate}. 
\end{proof}

For a constant step size $\eta\in(0,\bar\eta]$, \eqref{eq:g_rate} yields the best-iterate stationarity rate $\min_{k<T}\|\bm p^k\|_2=\mathcal O(T^{-1/2})$, matching the canonical rate of first-order methods on smooth nonconvex problems. In contrast, diminishing schedules $\eta^k=c(k+1)^{-a}$ degrade the certified rate to $\mathcal O(T^{-(1-a)/2})$ for $a\in(1/2,1)$ and $\mathcal O((\log T)^{-1/2})$ for $a=1$, since $\sum_{k<T}\eta^k$ then grows only as $\Theta(T^{1-a})$ or $\Theta(\log T)$. Moreover, they do not meet the condition $\inf_k\eta^k>0$ under which Theorem~\ref{th:g_rate} guarantees that the residual vanishes along the whole sequence.  

\subsection{Step-size selection} \label{subsec:select_stepsize}

Based on \autoref{th:g_rate}, one could resort to a {\bf \emph{fixed stepsize} (fss)} $\eta=\bar\eta$ in~\eqref{eq:g_safe_step}, which maximizes the guaranteed decrease in~\eqref{eq:g_sufficient_decrease}. However, $\bar\eta$ depends on $\underline d$ and $L_\nabla$, which are not known a priori; computable bounds on them give an admissible but overly conservative step size.

Since $d_n^k=\a_n^\hop\b_n^k$ are computed in every iteration of Algorithm~\ref{alg:SCA}, a step size satisfying the conditions of Theorem~\ref{th:g_rate} can be found adaptively by the following {\bf \emph{backtracking rule}}: maintain the running minimum $\underline d_k=\min_{s\le k}\min_n d_n^s$, and halve a trial step $\eta^k\le 1$ until
$$
\ell(\gam^k+\eta^k\bm p^k) \le \ell(\gam^k)-(\underline d_k^{\,2}/2)\,\eta^k\|\bm p^k\|_2^2 .
$$
Since $\underline d_k\ge\underline d$, every accepted step satisfies the sufficient-decrease inequality~\eqref{eq:g_sufficient_decrease}, and the test is passed by every $\eta^k\le\bar\eta$. Hence the loop terminates after at most $\lceil\log_2(1/\bar\eta)\rceil$ halvings, the accepted steps
satisfy $\eta^k\ge\bar\eta/2$, and Theorem~\ref{th:g_rate} applies in full, including the $\mathcal O(T^{-1/2})$ rate and $\|\bm p^k\|_2\to 0$, without $\underline d$ or $L_\nabla$ having to be known.  Rather than restarting from $\eta=1$ in every iteration, the search starts from the last accepted step size, and doubles it after two consecutive iterations accepted at the first trial, so that the step can grow again. Every accepted step still satisfies $\eta^k\ge\bar\eta/2$, so the rate guarantee is unaffected. The rule is also inexpensive: evaluating $\ell$ at a trial point requires the Cholesky factor of $\M$ at that point, which is precisely the factor needed by the update of the next iteration. An accepted first trial thus costs nothing extra, and each rejected trial costs one additional factorization.

Another alternative is to use {\bf \emph{diminishing step-size} (dss)}, which however does not provide the best possible rate. As default dss-sequence, we use $\eta^k = \eta^{k-1}\,\big(1 - \epsilon\,\eta^{k-1}\big)$, where $\epsilon \in (0,1)$ is a given constant, and $\eta^0 <1/\epsilon$. We set $\eta^0=0.99$ and $\epsilon=0.05$ as default values. 

To conclude, a constant step size attains the best worst-case rate only if $\eta$ is tuned close to the unknown constant $\bar\eta$. The backtracking rule requires no tuning, retains the $\mathcal O(T^{-1/2})$ rate (its steps satisfy $\eta^k\ge\bar\eta/2$), and takes larger steps whenever the landscape permits.  Backtracking is therefore our recommended choice, while the diminishing schedule serves
mainly as a baseline for comparisons.

\section{SCA via damped fixed-point iterations}\label{sec:CL-FP}

The fixed-point characterization $\gam=[H(\gam)]_+$ of
Lemma~\ref{lem:fixed_point} suggests the iteration
$\gam^{k+1}=[H(\gam^k)]_+$. The first
observation is that this map is a \emph{diagonally scaled projected
gradient step}:
\begin{align}
    &[H(\gam)]_+
    =\arg\min_{\bm u\ge\bm0}\;
    \nabla\ell(\gam)^\top(\bm u-\gam)
    +\frac12\sum_{n=1}^N d_n^{2}\,(u_n-\gamma_n)^2 
    \label{eq:fp_scaled_gradient}
\end{align}
where $d_n = \nabla_n f(\gam)$ defined by \eqref{eq:partial_gammai_ncvx}. 
This result easily  follows because the quadratic problem in \eqref{eq:fp_scaled_gradient} is
separable in coordinates and its $n^{\text{th}}$ coordinate minimizer is
$[\gamma_n-\nabla_n\ell(\gam)/d_n^2]_+$, precisely the fixed point map $H_n(\gam)$. 
Thus we define damped fixed point iteration as
\beq \label{eq:CLFPiter}
\begin{aligned} 
\gam^{\iter+1} &=  \gam^{\iter} + \eta^k\, \mathbf{q}^k,  \ \mathbf{q}^k =[H(\gam^k)]_+ - \gam^k. 
\end{aligned} 
\eeq
The damped
iteration~\eqref{eq:CLFPiter} is itself a parallel SCA scheme: it minimizes, per coordinate, the strongly convex \emph{quadratic} surrogate in \eqref{eq:fp_scaled_gradient} (curvature $d_n^2$, gradient matching $\nabla\ell$ at $\gam^k$), whereas CL-SCA minimizes the nonquadratic surrogate $\ell_n$ of~\eqref{eq:tilde_f}.  This interpretation explains why, as shown below, the two algorithms obey the same convergence guarantee with the same constants.  The method \eqref{eq:CLFPiter}  was proposed in \cite[cf. Eq. (50)]{tan2025fast} under the name PSCA-ML-UD. Here, however, we derived the method as a damped fixed-point iteration and henceforth refer to the algorithm as CL-FP to emphasize this connection. In the next subsection we derive the stationary rate of CL-FP and discuss similarities and differences with the CL-SCA algorithm.  We emphasize that in \cite{tan2025fast}, the convergence rate and step size selection was not addressed and thus results of subsection~\ref{subsec:CL-FP-rate} are new. 

\subsection{Stationarity rate of CL-FP and comparison to CL-SCA} \label{subsec:CL-FP-rate}

We wish to point out that the  \emph{undamped} map $T(\gam)=[H(\gam)]_+$ is, in general, not a contraction, so that the classical fixed-point theory (Banach) does not apply and damping is not merely cosmetic.

\begin{theorem} \label{th:fp_rate}
Let $\mathbf{q}^k=[H(\gam^k)]_+-\gam^k$ and let $\underline d$,
$L_\nabla$, and $\bar\eta=\min\{1,\underline d^{\,2}/L_\nabla\}$ be
as in Lemma~\ref{lem:g_constants} and
Theorem~\ref{th:g_rate}. If $0<\eta^k\le\bar\eta$, then the
iterates of \eqref{eq:CLFPiter} satisfy $\gam^k\in\mathcal K_0$
and
\begin{equation}    \label{eq:fp_sufficient_decrease}
\ell(\gam^{k+1}) \le \ell(\gam^k) - \frac{\underline d^{\,2}}{2} \,\eta^k\|\bm q^k\|_2^2 , 
\end{equation}
and consequently the rate bound~\eqref{eq:g_rate}, the
best-iterate rate $\min_{k<T}\|\bm q^k\|_2=\mathcal O(T^{-1/2})$
for constant $\eta\in(0,\bar\eta]$, and when
$\inf_k\eta^k>0$, the convergence
$\|\bm q^k\|_2\to0$ with every accumulation point KKT-stationary, all
hold precisely.
\end{theorem}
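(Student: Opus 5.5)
The plan is induction on $k$: assume $\gam^k\in\mathcal K_0$, prove \eqref{eq:fp_sufficient_decrease}, and deduce $\ell(\gam^{k+1})\le\ell(\gam^k)\le\ell(\gam^0)$, so that $\gam^{k+1}\in\mathcal K_0$. The base case $\gam^0\in\mathcal K_0$ is trivial. Since $\eta^k\le1$ and both $\gam^k$ and $[H(\gam^k)]_+$ are nonnegative, $\gam^{k+1}$ is a convex combination of them and lies in $\mathbb R^N_{\ge0}$. On that set Lemma~\ref{lem:g_constants}(ii) gives a global Lipschitz constant, so the quadratic upper bound \eqref{eq:quad_upperbound_L} applies between $\gam^k$ and $\gam^{k+1}$ with no localization argument:
\[
\ell(\gam^{k+1})\le \ell(\gam^k)+\eta^k\nabla\ell(\gam^k)^\top\bm q^k+\frac{L_\nabla(\eta^k)^2}{2}\|\bm q^k\|_2^2 .
\]

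The key step is a descent-direction inequality for $\bm q^k$, namely $\nabla\ell(\gam^k)^\top\bm q^k\le-\sum_n (d_n^k)^2(q_n^k)^2\le-\underline d^{\,2}\|\bm q^k\|_2^2$. I would get it from the variational characterization \eqref{eq:fp_scaled_gradient}. The $n$-th coordinate $u_n^\star=[H_n(\gam^k)]_+$ minimizes the strongly convex scalar function $\nabla_n\ell\,(u-\gamma_n^k)+\tfrac12 d_n^2(u-\gamma_n^k)^2$ over $u\ge0$. The optimality condition is
\[
\big(\nabla_n\ell+d_n^2(u_n^\star-\gamma_n^k)\big)(u-u_n^\star)\ge0\quad\text{for all } u\ge0 .
\]
Taking the feasible point $u=\gamma_n^k$ gives $\nabla_n\ell\, q_n\le -d_n^2 q_n^2$. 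Summing over $n$ and using $d_n(\gam^k)\ge\underline d$, which holds because $\gam^k\in\mathcal K_0$ by the induction hypothesis and Lemma~\ref{lem:g_constants}(i), yields the claim.

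Plugging this into the quadratic bound gives
\[
\ell(\gam^{k+1})\le\ell(\gam^k)-\eta^k\|\bm q^k\|^2\big(\underline d^{\,2}-\tfrac{L_\nabla\eta^k}{2}\big),
\]
and $\eta^k\le\underline d^{\,2}/L_\nabla$ makes the bracket at least $\underline d^{\,2}/2$. This is \eqref{eq:fp_sufficient_decrease} and closes the induction.

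The remaining conclusions follow exactly as in Theorem~\ref{th:g_rate}, with $\bm p^k$ replaced by $\bm q^k$. First, telescope over $k<T$ and use $\ell(\gam^T)\ge\ell_{\inf}$. With $\gam^0=\bm0$ we have $\ell(\bm0)=\tr(\SCM)/\sigma^2+M\log\sigma^2$ and $\ell_{\inf}\ge M\log\sigma^2$; bounding the minimum by the weighted average then gives \eqref{eq:g_rate}. A constant $\eta$ gives the $\mathcal O(T^{-1/2})$ best-iterate rate. If $\inf_k\eta^k>0$, summability of $\eta^k\|\bm q^k\|^2$ forces $\|\bm q^k\|\to0$.

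For accumulation points, let $\gam^{k_j}\to\gam^\ast$, with $\gam^\ast\in\mathcal K_0$ by compactness. Then $d_n$, $e_n$, and hence $H$ and $[H]_+$, are continuous on $\mathbb R^N_{\ge0}$, since $\M\succ0$ there and $d_n>0$. Therefore $[H(\gam^\ast)]_+-\gam^\ast=\lim_j\bm q^{k_j}=\bm0$, and Lemma~\ref{lem:fixed_point} shows that $\gam^\ast$ is KKT.

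The main obstacle is the descent inequality with the weighted curvature $d_n^2$. It needs the uniform lower bound $\underline d$, which holds only on $\mathcal K_0$, and that is why the induction must carry $\gam^k\in\mathcal K_0$ forward before the inequality is used.
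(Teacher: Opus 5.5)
Your proposal is correct and follows the paper's proof. The shared structure is induction on $\gam^k\in\mathcal K_0$, the quadratic upper bound on the segment $[\gam^k,[H(\gam^k)]_+]\subset\mathbb R^N_{\ge0}$, the inequality $\nabla\ell(\gam^k)^\top\bm q^k\le-\sum_n d_n^2 (q_n^k)^2$, and the telescoping and continuity argument carried over from Theorem~\ref{th:g_rate}. The only difference is that you derive this inequality from the optimality condition of the separable subproblem \eqref{eq:fp_scaled_gradient} tested at $u=\gamma_n^k$, whereas the paper checks it case by case (projection inactive, active with $\gamma_n>0$, active with $\gamma_n=0$). Your route establishes the same fact more compactly.
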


\begin{proof} See Appendix~\ref{app:th:fp_rate}.
\end{proof}

The damped FP iterations, which we refer to as CL-FP algorithm and the earlier CL-SCA are two members of the same family: parallel SCA algorithms whose convex coordinate surrogates match $\nabla_n\ell$ at the current iterate and have curvature $2d_n^{2}$ (CL-SCA) and $d_n^{2}$ (CL-FP) at a stationary point; the CL-FP surrogate is quadratic, hence strongly convex with modulus $\underline d^{\,2}$ on $\mathcal K_0$. They also share identical guarantees: (a) the same safe damping bound $\bar\eta$, (b) the same sufficient-decrease constant $\underline d^{\,2}/2$, (c) the same $\mathcal O(T^{-1/2})$ rate, and (d)  the same per-iteration computational complexity. Their residuals, $\bm p = \hat\gam(\gam)-\gam$ in \eqref{eq:g_direction} and $\bm q =[H(\gam)]_+-\gam$ in \eqref{eq:CLFPiter}, respectively,  verify 
\begin{equation} \label{eq:qn_relateto_pn}
    q_n
    =(1+\sqrt{e_n/d_n})\,p_n
\end{equation}
whenever neither projection is active. Thus,  $|p_n|\le|q_n|$ always holds.  The residual norms,  $\Delta_{\mathrm{SCA}}(\gam)=\|\hat\gam(\gam)-\gam\|_2$ and $\Delta_{\mathrm{FP}}(\gam)=\|[H(\gam)]_+-\gam\|_2$, are equivalent stationarity measures:   $\Delta_{\mathrm{SCA}}(\gam) \le\Delta_{\mathrm{FP}}(\gam) \le(1+\sqrt{\bar r})\Delta_{\mathrm{SCA}}(\gam)$ with $\bar r=\max_{\gam\in\mathcal K_0}\max_n e_n/d_n<\infty$. The two maps differ,  however, in their \emph{local} behavior near a KKT point $\gam^\star$: there $e_n/d_n\to1$ on the support, so the FP step is asymptotically exactly \emph{twice} the SCA step, and the local Jacobians on the support (under strict complementarity) are 
\begin{equation*}
    \bm J^{\mathrm{FP}}_{\mathcal S}
    =\bm I-\bm D_{\mathcal S}
    [\nabla^2\ell]_{\mathcal S\mathcal S},
    \qquad
    \bm J^{\mathrm{SCA}}_{\mathcal S}
    =\bm I-\tfrac12\,\bm D_{\mathcal S}
    [\nabla^2\ell]_{\mathcal S\mathcal S}.
\end{equation*}
where $\bm D_{\mathcal S}=\diag\bigl(1/d_n(\gam^\star)^2\bigr)_{n\in\mathcal S}$ and $\mathcal S=\supp(\gam^\star)=\{\,n\in\{1,\ldots,N\}:\gamma_n^\star>0\,\}$. Therefore, the undamped SCA map is locally convergent whenever the eigenvalues of the scaled Hessian lie in $(0,4)$, while the undamped FP map requires the twice-smaller interval $(0,2)$: CL-SCA has double the local stability margin, where the full step $\eta^k=1$ is often  accepted by CL-SCA but destabilizes the FP iteration on the same data. This also confirms our empirical observation that CL-SCA tolerates roughly twice the damping factor of the FP iteration. 
 
Finally, since Theorem~\ref{th:fp_rate} holds with the same constants $\underline d$ and $L_\nabla$ as Theorem~\ref{th:g_rate}, the safe constant step size $\bar\eta$, the diminishing step-size rule, and the backtracking rule of Subsection~\ref{subsec:select_stepsize} apply to the FP iteration without modification. In practice, however, the FP displacement exceeds the SCA displacement by the factor $1+\sqrt{e_n/d_n}$ (as noted in \eqref{eq:qn_relateto_pn}), so a fixed step tuned for CL-SCA is typically too large for CL-FP, whereas the backtracking rule adapts to this automatically.

\subsection{Relation to the MM-based algorithm} \label{subsec:mm_vs_sca}

The  CLMM \cite{esfandiari2026mm} method majorizes \emph{both} terms of $\ell(\gam)$: the concave $\log|\M|$ by the same linearization used in \eqref{eq:tilde_f}, and the convex $g(\gam)$ by the matrix upper bound of \cite[Prop.~4]{sun2016robust}, applied after absorbing part of the noise floor into the signal powers so that the bound remains tight where $\gamma_n=0$. This requires a majorization constant $\nu \ge \lambda_{\max}(\A\A^\hop)$ (denoted $\eta$ in \cite{esfandiari2026mm} but  renamed here to avoid a clash with step size $\eta^k$).  After some algebra, and using notations in this paper,  the CLMM update rewrites as 
\begin{equation} \label{eq:CLMM_update}
\gamma_n^{k+1} = \Big[\, \gamma_n^{\iter} + \big(\gamma_n^{\iter} + \sigma^2/\nu \big)\Big( \sqrt{r_n^k} - 1 \Big) \Big]_+ ,
\end{equation}
which is precisely \eqref{eq:SCA_optim_gammai_rn} 
with the multiplier $1/d_n^k$ replaced by $\gamma_n^k + \sigma^2/\nu$. 
Together with \eqref{eq:MSBL_update} and \eqref{eq:gamma_i_sol2}, all four update maps considered in this paper are then members of a single family, each of the form $\gamma_n \mapsto \gamma_n + (\text{multiplier}) \times (\text{residual})$ before projection, as summarized in Table~\ref{tab:update_family}.

\begin{table}[!t]
\renewcommand{\arraystretch}{1.3}
\caption{The four update maps as one family: $\gamma_n \mapsto \gamma_n + (\text{multiplier}) \times (\text{residual})$, before projection onto $\gamma_n \ge 0$.
}\label{tab:update_family} \vspace{-4pt}
\centering
\begin{tabular}{lcc}
\hline
 & multiplier & residual  \\
\hline
EM of M-SBL \eqref{eq:MSBL_update} & $\gamma_n^2 d_n$ & $r_n - 1$  \\
CL-FP \eqref{eq:gamma_i_sol2}  & $(1+\sqrt{r_n})/d_n$ & $\sqrt{r_n} - 1$   \\
CL-SCA \eqref{eq:SCA_optim_gammai} & $1/d_n$ & $\sqrt{r_n} - 1$  \\
CLMM \eqref{eq:CLMM_update} & $\gamma_n + \sigma^2/\nu$ & $\sqrt{r_n} - 1$ \\
\hline
\end{tabular}
\end{table}

The residual vanishes exactly at $r_n = 1$, i.e. at $e_n = d_n$, so on the support all four maps stop at the same points. At the boundary $\gamma_n = 0$ they do not, and the multiplier is what decides it. For CL-FP, CL-SCA and CLMM the multiplier is bounded away from zero (($1+\sqrt{r_n})/d_n$, $1/d_n$, and $\gamma_n + \sigma^2/\nu \ge \sigma^2/\nu>0$, respectively), so a coordinate at $\gamma_n=0$ is left unchanged only when $r_n \le 1$, which is precisely the KKT condition~\eqref{eq:KKT}. Lemma~\ref{lem:g_first_order} therefore extends precisely to CLMM, whose fixed points are again the KKT points of \eqref{eq:orig_formulation}. The EM multiplier $\gamma_n^2 d_n$, in contrast, vanishes at $\gamma_n = 0$, so every coordinate with $\gamma_n=0$ is fixed irrespective of the sign of $\nabla_n \ell$: the fixed points of \eqref{eq:MSBL_update} strictly contain the KKT points, the spurious ones being those with $\gamma_n = 0$ and $e_n > d_n$. This is 
is also why \eqref{eq:MSBL_update} needs no projection: its multiplicative form cannot leave the nonnegative orthant. 

Moreover, the two multipliers of CL-SCA and CLMM can be compared exactly. By \eqref{eq:gamma_i_sol1_apu_2} we have $1/d_n = \gamma_n + 1/(\a_n^\hop \c_n)$, so both are of the form $\gamma_n$ plus a positive quantity, and since $\M_{\backslash n}\succeq \sigma^2 \I$ gives $\a_n^\hop \c_n \le \|\a_n\|^2/\sigma^2$ while $\nu \ge \lambda_{\max}(\A \A^\hop) \ge \|\a_n\|^2$,
\begin{equation} \label{eq:MM_damping}
\frac{\sigma^2}{\nu} \; \le \; \frac{\sigma^2}{\|\a_n\|^2} \; \le \; \frac{1}{\a_n^\hop \c_n} ,
\end{equation}
so that the CLMM multiplier never exceeds the CL-SCA one, at every coordinate and every iterate. In this precise sense, CLMM is an intrinsically \emph{damped} CL-SCA iteration, which explains why it is monotone with no step size while the parallel update \eqref{eq:SCA_smoothing2} requires the damping $\eta^k$: the damping of CLMM is built into its majorizer, and $\nu$ is its price. The gap in \eqref{eq:MM_damping} is exactly the looseness of $\nu$. With the choice $\nu = \tr(\A \A^\hop)$ used in \cite{esfandiari2026mm} and equal-norm atoms, $\nu = N\|\a_n\|^2$, starting from $\gam^0=\bm 0$, the CLMM step is thus shorter than the CL-SCA step by a factor $N$. This does not translate into a proportional loss of speed, however, since \eqref{eq:CLMM_update} is multiplicative in the shifted variable $\bar\gamma_n = \gamma_n + \sigma^2/\nu$, for which it reads $\bar\gamma_n^{k+1} = \sqrt{r_n^k}\, \bar\gamma_n^{k}$. A conservative $\nu$ therefore costs a logarithmic rather than a linear number of additional iterations. 

Two remarks complete the comparison. First, both surrogates contain the same linear term $d_n^k\gamma_n$ of \eqref{eq:tilde_f}, since both methods linearize the same concave term. They differ only in the convex term $g$, which CL-SCA keeps exact along the coordinate (via the Sherman--Morrison identity), whereas CLMM majorizes it to obtain a surrogate that is separable in all $N$ coordinates. The CL-SCA surrogate is thus tighter, but only along one
coordinate with the others frozen, whereas the CLMM surrogate is a joint majorizer. Second, the comparison above concerns the undamped map $\hat\gam(\gam^k)$, i.e., \eqref{eq:SCA_optim_gammai} with $\eta^k=1$, not
the damped iteration of Algorithm~\ref{alg:SCA}.

\subsection{Convergence analysis and a new Hybrid method}
\label{subsec:convergence}

Convergence is assessed with two complementary measures. The \emph{objective gap} $\ell(\gam^k)-\ell^\star$, where $\ell^\star$ is the smallest objective value recorded over a $1500$-iteration reference run, is plotted on a logarithmic axis to resolve the behavior near stationarity: a locally linearly convergent method appears as a straight line whose slope is its contraction rate. Since the raw objective is used, the gap also exposes the non-monotone transient of the diminishing step-size variants. The \emph{normalized running-minimum loss} $\ell_{\mathrm{norm}}^k=\min_{s\le k}\ell(\gam^s)/\ell(\gam^0)$ starts from one for every run and is monotone for all methods. It measures time-to-target and is the view in which iteration counts and wall-clock times are reported.

All methods are initialized at $\gam^0=\mathbf{1}$, since zero coordinates remain fixed under the EM update, and a common initialization ensures that methods converging to the same stationary point converge to the same level in both figures. We compare CWO~\cite[Alg.~1]{fengler2021non}, EM-based solver of M-SBL~\cite{wipf2007empirical}, and CLMM  \cite{esfandiari2026mm} which uses $\nu = \| \A \|_{\mathrm{F}}^2$. We also consider a variant CLMM-2 which uses $\nu= \| \A \|_2^2$ (instead of the Frobenius norm) which by  \eqref{eq:MM_damping} and $\nu \geq \lambda_{\max}(\A \A^\hop) = \| \A \|_2^2 $ provides the largest step size and thus a better local rate. Finally, we consider  the proposed variants: CL-SCA-dss/CL-FP-dss use the diminishing step-size sequence with $\eta^0=0.99$ and $\epsilon=0.05$, CL-SCA-fss uses $\eta=0.6$  and CL-FP-fss uses $\eta=0.35$.  These were hand-tuned step-sizes (chosen as large as possible) that worked for our set-up with dimension $M=20$.  The hybrid CL-Hyb applies CL-SCA-fss for the first five iterations and CL-FP-fss thereafter, with the same tuned step sizes. CL-SCA-bt and CL-Hyb-bt are the corresponding tuning-free variants: every step size is selected by the backtracking rule of Subsection~\ref{subsec:select_stepsize}, and CL-Hyb-bt switches from the SCA to the FP proposal after five iterations, as CL-Hyb does. Since near a KKT point the FP displacement is about twice the SCA displacement, an FP step $\eta$ acts locally like an SCA step $2\eta$; at the switch, CL-Hyb-bt therefore halves the last accepted step size, so that its first FP trial has the same effective length as the last SCA step. In order to be fair, all methods are implemented by ourselves in Python/NumPy with identical linear-algebra routines and optimized for best performance. For CWO we also use the random permutation strategy which is reported to improve its speed \cite{chen2019covariance}.

{\it Simulation setting:} For each trials, $\A$ is a random Bernoulli matrix, with elements generated as  $[\A]_{mn} \in \{ \pm \tfrac{1}{\sqrt{2}} \pm \jmath\tfrac{1}{\sqrt{2}} \}$, so that $\|\a_n\|_2^2=M$ while the support $\mathcal M$ is drawn
uniformly without replacement from $\{1,\ldots,N\}$.  The noise variance  is fixed $\sigma^2=1$ and the $K$-nonzero signal powers of the Gaussian sources are uniformly distributed between \([-15,0]\) in dB scale and generated indpendently for each trial. 

Fig.~\ref{fig:loss_curve} reports both measures for $K=30$, $M=20$, $N=300$, and $Q=50$ over $20$ MC runs. The per-run target $\ell_{\mathrm{ref}}$ in the bottom panel is the objective value reached by ten CWO cycles. This time-to-target metric compares progress in objective value only: ten CWO cycles are a heuristic budget that does not imply stationarity.

Several observations can be drawn. First, the two baselines are far slower than the proposed methods. The EM iterations of M-SBL reduce the gap by less than a factor of $20$ over $120$ iterations, from $60$ to $3.2$, and never reach $\ell_{\mathrm{ref}}$ within $250$ iterations; CLMM decays faster but still slowly (gap $6.3\times10^{-3}$ at $k=120$) and does not reach the target either. CLMM-2 improves over CLMM and achieves gap $2.53 \times10^{-3}$ at $k=120$.  By contrast, the fixed-step and backtracking variants drive the gap below $10^{-5}$ within $55$ iterations, and the diminishing-step variants within $100$.

Second, the fixed step size gives a substantial gain over the diminishing schedule. At $k=50$ the median gap of CL-SCA-fss is $1.8\times10^{-5}$ against $2.6\times10^{-4}$ for CL-SCA-dss, and that of CL-FP-fss is $8.4\times10^{-6}$ against $8.6\times10^{-5}$ for CL-FP-dss, reductions by factors of $14$ and $10$ that exceed two orders of magnitude by $k=100$. The dss curves are also markedly non-monotone during the first ten iterations, which the running-minimum view hides by construction. Both effects agree with Section~\ref{sec:convergence}: a diminishing sequence shrinks the accumulated step length that drives the convergence rate.

Third, the two fixed-step methods are complementary. At $k=10$ the CL-SCA-fss gap is an order of magnitude smaller than that of CL-FP-fss ($3.3\times10^{-1}$ versus $3.8$), whereas from $k\approx 35$ the ordering reverses. This agrees with the relation between their displacements: when the projection is inactive, the FP step equals $(1+\sqrt{e_n/d_n})$ times the SCA step. Far from stationarity $e_n/d_n$ can be large, so the FP proposal is more aggressive and needs a smaller damping factor ($\eta=0.35$ versus $0.6$); near stationarity $e_n/d_n\to 1$ on the support, and the FP step becomes about
twice the SCA step.

This motivates CL-Hyb, which uses the more stable SCA update for five iterations and then switches to FP. Since both updates satisfy the same sufficient-decrease inequality with the same constants
({\it cf.}~Section~\ref{sec:convergence}), the convergence guarantees carry over. CL-Hyb tracks CL-SCA-fss through the transient and then inherits the faster FP decay: at $k=50$ its gap is $3.9\times10^{-6}$, against $8.4\times10^{-6}$ for CL-FP-fss and $1.8\times10^{-5}$ for CL-SCA-fss.

Fourth, backtracking yields a further large gain without any tuning. After a slower start, where the sufficient-decrease test enforces small steps, CL-Hyb-bt attains the smallest gap of all methods from $k\approx 15$. At $k=50$ the gaps of CL-SCA-bt and CL-Hyb-bt are $5.8\times10^{-7}$ and $2.1\times10^{-8}$, i.e., $31$ and $185$ times smaller than those of CL-SCA-fss and CL-Hyb, and they reach the numerical floor of the gap ($\approx 5\times10^{-12}$) by $k\approx 100$ and $k\approx 80$, respectively, against $k\approx 120$ for CL-Hyb. The accepted steps are typically larger than the hand-tuned constants. 
 
Fifth, the running-minimum panel translates these gains into runtime. CL-Hyb-bt reaches $\ell_{\mathrm{ref}}$ in $30$ iterations and $3.0$~ms on average, and CL-FP-fss in $44$ iterations and $3.9$~ms, whereas M-SBL and CLMM do not reach it within $250$ iterations ($>28.5$~ms and $>21.8$~ms). CWO needs only ten cycles but $8.4$~ms, so CL-Hyb-bt in this setting attains the same objective level in near one third of the time despite using three times as many iterations. The reason is the cost per iteration: CL-SCA/CL-FP update the full vector $\hat\gam$ in one vectorized parallel step, whereas CWO cycles through the $N$ coordinates sequentially.

Finally, we computed the probability of missdetection ($P_{\text{MD}}$) under the top-$K$ rule, defined as the MC average of $|\mathcal{M} \setminus \hat{\mathcal{M}}|/K$ All CL-SCA, CL-FP and CL-Hyb variants and CWO give $P_{\mathrm{MD}}=0.267$, consistent with their converging to the same stationary point, whereas M-SBL stalls at a higher loss and yields $P_{\mathrm{MD}}=0.293$. CLMM also attains $0.267$ although it does not reach $\ell_{\mathrm{ref}}$, indicating that the support is identified well before full convergence.

\begin{figure}[!t]
\centerline{\input{tikz/gap_c_K=30_M=50_N=300_L=20_MC=20_20261013_2354.tex}}
\centerline{\input{tikz/runmin_K=30_M=50_N=300_L=20_MC=20_20261013_2354}}
\vspace{-3pt}
\caption{Convergence for $M=20$, $K=30$, $Q=50$, $N=300$ ($20$ MC runs)  Top: median objective gap $\ell(\gam^k)-\ell^\star$.
Bottom: normalized running-minimum loss for representative runs, drawn until reaching the target $\ell_{\mathrm{ref}}$ (the loss after ten CWO cycles) or
$250$ iterations. The legend gives the average iterations and time-to-target.   ``$>$'' marks methods that did not reach it.  CWO required $8.4$~ms for its ten cycles. 
For better visualisation we truncate range of $k$ ($x$-axis) to $k\leq 60$ at the bottom panel. } 
\label{fig:loss_curve}
\end{figure}

\section{Application to JADCE in GFRA} \label{sec:source} 

JADCE in GFRA  is a central problem in  mMTC, where many devices with sporadic traffic access the uplink. Recent work shows that CL methods can outperform conventional CS approaches \cite{chen2019covariance,fengler2021non,chen2021phase,marata2025activity,wang2025robust}. In contrast to CS methods, whose performance is limited when the number of active devices exceeds the pilot length, CL methods can relax this limitation and achieve comparable accuracy with substantially shorter pilots \cite{fengler2021non,chen2021phase}.

We consider an uplink, single-cell massive random access scenario with $N$ single-antenna machine-type devices (MTDs) communicating with a BS equipped with $Q$ antennas.  For device identification, each device $n$ is preassigned a unique pilot (signature) sequence $\mathbf{a}_n = (a_{1n}, \ldots, a_{Mn})^\top$ which is known at the BS. The device traffic is assumed to be sporadic, that is, in each coherence interval only $K \ll N$ devices are randomly active. In mMTC settings, the total number of devices is typically much larger than the pilot length ($N \gg M$).

The received data follow the MMV model~\eqref{eq:MMVmodel}, where  $\Y \in \C^{M\times Q}$ contains the observations over $M$ pilot symbols and $Q$ BS antennas, and $\A=(\a_1 \cdots  \, \a_N)\in\C^{M\times N}$ collects the device pilot sequences. The effective channel matrix $\X=(\tilde{\x}_1 \, \ldots \, \tilde{\x}_N)^\top\in\C^{N\times Q}$ has rows
\beq
\tilde{\x}_n=\sqrt{\gamma_n}\mathbf{h}_n,\qquad
\gamma_n=\alpha_n\beta_n\rho_n,
\label{eq111}
\eeq
where $\mathbf{h}_n\in\C^Q$ is the small-scale channel vector, $\alpha_n\in\{0,1\}$ is the activity indicator, $\beta_n>0$ is the large-scale fading coefficient (LSFC), and $\rho_n$ is the transmit power. Since
only $K\ll N$ devices are active, $\X$ is $K$-rowsparse, with support $\mathcal{M}=\{n\in[\![N]\!]:\alpha_n=1\}.$ Under uncorrelated Rayleigh fading, the entries of $\mathbf{h}_n$ are modeled as i.i.d. $\mathcal{CN}(0,1)$. Thus, each received antenna snapshot has covariance as in \eqref{eq:M} and the corresponding scaled negative LLF  is the CL objective in~\eqref{eq:conv+nonconv}. JADCE is then performed using Algorithm~\ref{alg:SBL-JADCE} with any CL solver in Step~1. 

The number of active devices $|\mathcal{M}|=K$ is assumed to be fixed  (non-random, deterministic variable) and known. Hence  the estimated support $\hat{\mathcal M}$ is obtained by the top-$K$  rule, yielding $|\hat{\mathcal M}|=K$. Thus it is sufficient to use  $P_{\text{MD}}$ as the performance metric  since $|\mathcal M\setminus\hat{\mathcal M}| =|\hat{\mathcal M}\setminus\mathcal M|$,  and thus (empirical) probability of false alarm ($P_{\text{FA}})$ contain the same information up to normalization. 

Simulation setting follows that of \autoref{subsec:convergence}, i.e., random Bernoulli matrix $\A$, $\gamma_n \sim \mathrm{Unif}[-15,0]$ in dB scale, $\sigma^2=1$, and $\mathcal M$ (with $|\mathcal M|=K$) is drawn  uniformly without replacement from $\{1,\ldots,N\}$ for each MC trial.  Note that we assume no power-control, i.e., $\rho_n=1$ in \eqref{eq111}.  To evaluate the JADCE framework in Algorithm~\ref{alg:SBL-JADCE}, we compare different solvers for Step~1: the proposed CL-Hyb-bt,  the EM solver of M-SBL~\cite{wipf2007empirical}, CLMM-2, CWO~\cite[Alg.~1]{fengler2021non}, and
CL-MP~\cite[Alg.~1]{marata2025activity}. For all methods, channel estimates are computed by Step~4 of Algorithm~\ref{alg:SBL-JADCE}.  For CL-MP, knowledge of $K$ is used to stop after $K$ selected atoms. We terminate each algorithm using \eqref{eq:terminate_CL} with $\delta=10^{-2}$.   This criterion is algorithm-independent: it is a function of the current iterate $\gam^k$ alone, and it vanishes if and only if $\gam^k$ satisfies the KKT conditions. Consequently, all methods are stopped once they attain the same level of proximity to the KKT conditions, which makes the comparison of  running times fair. The maximum number of iterations is capped to $500$ for all methods.  The initial start is $\gam^0 = \mathbf{1}$ for M-SBL and $\gam^{0}= \mathbf{0}$ for other methods.

\begin{figure}[!t]   
\centering
\begin{tikzpicture}

\tikzset{
  hyb/.style ={semithick, steelblue31119180,     solid,   mark=*,        mark size=1.2},
  cwo/.style ={semithick, darkorange25512714,    dashed,  mark=square,   mark size=2.2,
               mark options={solid, fill=none}},
  clmm/.style={semithick, forestgreen4416044,    dotted,  mark=o,        mark size=3.3,
               mark options={solid, fill=none}},
  clmp/.style={thick,     crimson2143940,        solid,   mark=triangle*,mark size=2.2},
  msbl/.style={thick,     mediumpurple148103189, solid,   mark=diamond*, mark size=2.2},
}

\begin{axis}[
  width=0.78\columnwidth,
  height=5.4cm,
     scale only axis,
  axis line style={semithick},
     yticklabel style={text width=2.2em, align=right},
  xmin=17, xmax=95,                 
  xtick={20,30,40,50,60,70,80},
  xmajorgrids,
  ymode=log,
  ymin=1e-5, ymax=0.1,
  ytick={1e-5,1e-4,1e-3,1e-2,1e-1},
  ylabel={Probability of miss detection},
  ymajorgrids,
  grid style={darkgray176},
  tick pos=left,
  tick style={color=black},
  tick label style={font=\footnotesize},
  xlabel style={yshift=3pt, font=\footnotesize},
  ylabel style={yshift=-3pt, font=\footnotesize},
]

\addplot[hyb] coordinates {(20,0.02396) (30,0.007505) (40,0.002385) (50,0.00077) (60,0.000225) (70,7.5e-5) (80,1.5e-5)};
\addplot[hyb] coordinates {(20,0.043273) (30,0.020417) (40,0.00995) (50,0.00458) (60,0.0020067) (70,0.00096333) (80,0.00046667)};
\addplot[hyb] coordinates {(20,0.071123) (30,0.041417) (40,0.025555) (50,0.015998) (60,0.010018) (70,0.0064475) (80,0.0040625)};

\addplot[cwo] coordinates {(20,0.02396) (30,0.007505) (40,0.002375) (50,0.00077) (60,0.00022) (70,8e-5) (80,1.5e-5)};
\addplot[cwo] coordinates {(20,0.043287) (30,0.02041) (40,0.00995) (50,0.0046) (60,0.0020067) (70,0.00096667) (80,0.00047667)};
\addplot[cwo] coordinates {(20,0.071113) (30,0.041385) (40,0.025545) (50,0.01602) (60,0.010008) (70,0.00645) (80,0.00408)};

\addplot[clmm] coordinates {(20,0.02398) (30,0.00752) (40,0.00237) (50,0.00077) (60,0.000225) (70,8e-5) (80,1.5e-5)};
\addplot[clmm] coordinates {(20,0.043283) (30,0.020397) (40,0.0099467) (50,0.0045833) (60,0.0019967) (70,0.00096667) (80,0.00047)};
\addplot[clmm] coordinates {(20,0.071093) (30,0.041382) (40,0.02553) (50,0.016) (60,0.010008) (70,0.006445) (80,0.00407)};

\addplot[clmp] coordinates {(20,0.02162) (30,0.00638) (40,0.001925) (50,0.000685) (60,0.000225) (70,8.5e-5) (80,3e-5)};
\addplot[clmp] coordinates {(20,0.04089) (30,0.01887) (40,0.00881) (50,0.0040667) (60,0.0019867) (70,0.00084333) (80,0.0005)};
\addplot[clmp] coordinates {(20,0.071443) (30,0.04038) (40,0.024628) (50,0.015543) (60,0.009505) (70,0.0062425) (80,0.004035)};

\addplot[msbl] coordinates {(20,0.023455) (30,0.007275) (40,0.002225) (50,0.000775) (60,0.000205) (70,8e-5) (80,1.5e-5)};
\addplot[msbl] coordinates {(20,0.04275) (30,0.020053) (40,0.0097233) (50,0.0044167) (60,0.00192) (70,0.00090333) (80,0.00044)};
\addplot[msbl] coordinates {(20,0.07071) (30,0.04109) (40,0.025318) (50,0.01579) (60,0.00978) (70,0.00635) (80,0.0039675)};

\node[anchor=west, font=\footnotesize] at (axis cs:81.5,2.2e-5)  {$K=20$};
\node[anchor=west, font=\footnotesize] at (axis cs:81.5,4.7e-4)  {$K=30$};
\node[anchor=west, font=\footnotesize] at (axis cs:81.5,4.05e-3) {$K=40$};

\end{axis}
\end{tikzpicture}\\[-1pt]
\begin{tikzpicture}

\begin{axis}[
  width=0.78\columnwidth,
  height=5.4cm,
  scale only axis,
  axis line style={semithick},
  xmin=17, xmax=95,                 
  xtick={20,30,40,50,60,70,80},
  xmajorgrids,
  ymode=log,
  ymin=0.01, ymax=0.3,
  ytick={0.01,0.02,0.05,0.1,0.2,0.3},
  log ticks with fixed point,
     yticklabel style={/pgf/number format/.cd, fixed, precision=2,
                     /tikz/.cd, text width=2.2em, align=right},
  ylabel={Probability of miss detection},
  ymajorgrids,
  grid style={darkgray176},
  tick pos=left,
  tick style={color=black},
  tick label style={font=\footnotesize},
  xlabel style={yshift=3pt, font=\footnotesize},
  ylabel style={yshift=-3pt, font=\footnotesize},
]

\addplot[hyb] coordinates {(20,0.12554) (30,0.08197) (40,0.05767) (50,0.04154) (60,0.02989) (70,0.02224) (80,0.01572)};
\addplot[hyb, forget plot] coordinates {(20,0.18730) (30,0.13958) (40,0.10984) (50,0.08880) (60,0.07342) (70,0.06156) (80,0.05243)};
\addplot[hyb, forget plot] coordinates {(20,0.25104) (30,0.19989) (40,0.16737) (50,0.14423) (60,0.12595) (70,0.11254) (80,0.09996)};

\addplot[cwo] coordinates {(20,0.12555) (30,0.08193) (40,0.05764) (50,0.04151) (60,0.02986) (70,0.02217) (80,0.01572)};
\addplot[cwo, forget plot] coordinates {(20,0.18731) (30,0.13957) (40,0.10981) (50,0.08879) (60,0.07338) (70,0.06154) (80,0.05240)};
\addplot[cwo, forget plot] coordinates {(20,0.25101) (30,0.19988) (40,0.16739) (50,0.14412) (60,0.12588) (70,0.11246) (80,0.09989)};

\addplot[clmm] coordinates {(20,0.12555) (30,0.08192) (40,0.05764) (50,0.04154) (60,0.02987) (70,0.02219) (80,0.01572)};
\addplot[clmm, forget plot] coordinates {(20,0.18727) (30,0.13953) (40,0.10977) (50,0.08873) (60,0.07340) (70,0.06148) (80,0.05238)};
\addplot[clmm, forget plot] coordinates {(20,0.25101) (30,0.19978) (40,0.16735) (50,0.14409) (60,0.12589) (70,0.11243) (80,0.09984)};

\addplot[clmp] coordinates {(20,0.12483) (30,0.07883) (40,0.05448) (50,0.03858) (60,0.02705) (70,0.02003) (80,0.01433)};
\addplot[clmp, forget plot] coordinates {(20,0.19914) (30,0.14633) (40,0.11318) (50,0.09168) (60,0.07569) (70,0.06301) (80,0.05285)};
\addplot[clmp, forget plot] coordinates {(20,0.27804) (30,0.22044) (40,0.18452) (50,0.15961) (60,0.14104) (70,0.12597) (80,0.11353)};

\addplot[msbl] coordinates {(20,0.12446) (30,0.08092) (40,0.05723) (50,0.04101) (60,0.02972) (70,0.02194) (80,0.01579)};
\addplot[msbl, forget plot] coordinates {(20,0.18662) (30,0.13953) (40,0.11025) (50,0.09009) (60,0.07537) (70,0.06428) (80,0.05519)};
\addplot[msbl, forget plot] coordinates {(20,0.25027) (30,0.20039) (40,0.16862) (50,0.14667) (60,0.12937) (70,0.11699) (80,0.10562)};

\node[anchor=west, font=\footnotesize] at (axis cs:81.5,0.0150) {$K=20$};
\node[anchor=west, font=\footnotesize] at (axis cs:81.5,0.0535) {$K=30$};
\node[anchor=west, font=\footnotesize] at (axis cs:81.5,0.1050) {$K=40$};

\end{axis}
\end{tikzpicture}\\[0pt]
\begin{tikzpicture}

\begin{axis}[
  width=0.78\columnwidth,
  height=5.4cm,
  scale only axis,
  axis line style={semithick},
 xlabel={$Q$ (number of antennas)},
  xmin=17, xmax=95,                 
  xtick={20,30,40,50,60,70,80},
  xmajorgrids,
  ymode=log,
  ymin=0.1, ymax=0.5,
  ytick={0.1,0.2,0.3,0.4,0.5},
  log ticks with fixed point,
     yticklabel style={/pgf/number format/.cd, fixed, precision=1,
                     /tikz/.cd, text width=2.2em, align=right},
  ylabel={Probability of miss detection},
  ymajorgrids,
  grid style={darkgray176},
  tick pos=left,
  tick style={color=black},
  tick label style={font=\footnotesize},
  xlabel style={yshift=3pt, font=\footnotesize},
  ylabel style={yshift=-3pt, font=\footnotesize},
   legend to name=Pmdlegend,
   legend style={font=\scriptsize, draw=white!15!black, inner sep=2pt,
                 /tikz/every even column/.append style={column sep=6pt}},
  legend columns=5,
]

\addplot[hyb] coordinates {(20,0.27115) (30,0.21684) (40,0.18011) (50,0.15468) (60,0.13476) (70,0.11997) (80,0.10722)};
\addlegendentry{CL-Hyb-bt}
\addplot[hyb, forget plot] coordinates {(20,0.35621) (30,0.30134) (40,0.26370) (50,0.23752) (60,0.21692) (70,0.20048) (80,0.18560)};
\addplot[hyb, forget plot] coordinates {(20,0.41990) (30,0.36433) (40,0.32888) (50,0.30295) (60,0.28168) (70,0.26587) (80,0.25118)};

\addplot[cwo] coordinates {(20,0.27107) (30,0.21683) (40,0.18001) (50,0.15469) (60,0.13475) (70,0.11990) (80,0.10713)};
\addlegendentry{CWO}
\addplot[cwo, forget plot] coordinates {(20,0.35622) (30,0.30132) (40,0.26364) (50,0.23738) (60,0.21663) (70,0.20026) (80,0.18544)};
\addplot[cwo, forget plot] coordinates {(20,0.42001) (30,0.36434) (40,0.32871) (50,0.30284) (60,0.28141) (70,0.26553) (80,0.25074)};

\addplot[clmm] coordinates {(20,0.27112) (30,0.21682) (40,0.18003) (50,0.15458) (60,0.13470) (70,0.11973) (80,0.10704)};
\addlegendentry{CLMM-2}
\addplot[clmm, forget plot] coordinates {(20,0.35617) (30,0.30123) (40,0.26353) (50,0.23740) (60,0.21666) (70,0.20012) (80,0.18520)};
\addplot[clmm, forget plot] coordinates {(20,0.41979) (30,0.36425) (40,0.32863) (50,0.30257) (60,0.28109) (70,0.26530) (80,0.25063)};

\addplot[clmp] coordinates {(20,0.28817) (30,0.22787) (40,0.18838) (50,0.16162) (60,0.13977) (70,0.12400) (80,0.10962)};
\addlegendentry{CLMP}
\addplot[clmp, forget plot] coordinates {(20,0.39554) (30,0.33410) (40,0.29538) (50,0.26781) (60,0.24664) (70,0.23068) (80,0.21619)};
\addplot[clmp, forget plot] coordinates {(20,0.48288) (30,0.42418) (40,0.38937) (50,0.36536) (60,0.34712) (70,0.33438) (80,0.32289)};

\addplot[msbl] coordinates {(20,0.27074) (30,0.21879) (40,0.18478) (50,0.16124) (60,0.14428) (70,0.13166) (80,0.12107)};
\addlegendentry{M-SBL}
\addplot[msbl, forget plot] coordinates {(20,0.35590) (30,0.30335) (40,0.26920) (50,0.24485) (60,0.22717) (70,0.21353) (80,0.20150)};
\addplot[msbl, forget plot] coordinates {(20,0.41702) (30,0.36485) (40,0.33153) (50,0.30853) (60,0.28938) (70,0.27589) (80,0.26375)};

\node[anchor=west, font=\footnotesize] at (axis cs:81.5,0.113) {$K=20$};
\node[anchor=west, font=\footnotesize] at (axis cs:81.5,0.199) {$K=30$};
\node[anchor=west, font=\footnotesize] at (axis cs:81.5,0.285) {$K=40$};

\end{axis}
\end{tikzpicture}\\[0pt]
\ref{Pmdlegend}
\caption{Probability of miss detection vs.\ the number of antennas $Q$ when $M=20$ (bottom panel), $M=30$ (middle panel) and $M=50$ (top panel). Number of MC trials: $10000$; stopping tolerance for all algorithms: $\delta = 10^{-2}$.}
\label{fig:AD1}
\end{figure}

First we study the impact of the pilot length $M$, number of BS antennas $Q$, and sparsity level $ K$ on JADCE performance. Fig.~\ref{fig:AD1} plots $P_{\mathrm{MD}}$ versus $Q$ for $K\in\{20,30,40\}$ and  $M=20,30,50$ from bottom to top.  We observe that CL-MP performs well for small $K$, but degrades as $K$ increases, especially for $M=20$ and $M=30$. Overall, CL-Hyb-bt, CLMM-2, and CWO achieve nearly identical missed-detection performance, while EM-based M-SBL is slightly worse, with a more visible gap for short pilots, e.g., $M=20$.

Fig.~\ref{fig:CPU} reports the  runtimes for $M=20$ and $M=50$ averaged over $Q$ and MC trials. CL-Hyb-bt is consistently faster than CWO, roughly by a factor of two.  CL-MP is the fastest: it is up to about three and a half times faster for $M = 20$ and $K=20$, and about twice as fast for $M = 20$ and $K=40$.  This is expected, since CL-MP is a greedy method that terminates after exactly $K$ iterations when $K$ is known. Finally, EM-based M-SBL is more than one order of magnitude slower than the proposed CL-Hyb-bt method, mainly because it requires several hundred iterations to converge.

\begin{figure}[!t]
\vspace{-2pt}
\centerline{\hspace{-20pt}\definecolor{crimson2143940}{RGB}{214,39,40}
\definecolor{darkgray176}{RGB}{176,176,176}
\definecolor{darkorange25512714}{RGB}{255,127,14}
\definecolor{forestgreen4416044}{RGB}{44,160,44}
\definecolor{mediumpurple148103189}{RGB}{148,103,189}
\definecolor{steelblue31119180}{RGB}{31,119,180}

\begin{tikzpicture}

\begin{axis}[
    width=0.39\columnwidth,
  height=3.9cm,
  scale only axis,
  ybar=0pt,                    
  bar width=1.5,               
  xtick={20,30,40},
  enlarge x limits=0.22,       
  ymin=0, ymax=0.095,          
  ytick={0,0.02,0.04,0.06,0.08},
  ymajorgrids,
  y grid style={darkgray176},
  tick pos=left,
  xtick style={color=black},
  ytick style={color=black},
  y tick label style={font=\notsotiny},
  x tick label style={font=\footnotesize},
  yticklabel style={/pgf/number format/.cd, fixed, fixed zerofill, precision=2},
  scaled y ticks=false,
  xlabel={$K$ (number of active devices)},
  ylabel={Running time [s]},
  xlabel style={yshift=3pt, font=\footnotesize},
  ylabel style={yshift=-3pt, font=\footnotesize},
  title={(a) $M=20$},
  title style={yshift=-7pt, font=\footnotesize},
  legend to name=CPUlegend,
  legend columns=5,
  legend style={font=\footnotesize, draw=white!15!black, inner sep=2pt,
                /tikz/every even column/.append style={column sep=8pt}},
  legend image code/.code={%
    \draw[#1,draw=none] (0cm,-0.5ex) rectangle (1.2ex,0.5ex);
  },
]

\addplot[draw=none, fill=steelblue31119180]
  coordinates {(20,0.00554122) (30,0.00579648) (40,0.00602830)};
\addlegendentry{CL-Hyb-bt}

\addplot[draw=none, fill=darkorange25512714]
  coordinates {(20,0.01312978) (30,0.01429363) (40,0.01537268)};
\addlegendentry{CWO}

\addplot[draw=none, fill=forestgreen4416044]
  coordinates {(20,0.04204229) (30,0.04552858) (40,0.04661181)};
\addlegendentry{CLMM-2}

\addplot[draw=none, fill=crimson2143940]
  coordinates {(20,0.00160529) (30,0.00238905) (40,0.00325476)};
\addlegendentry{CLMP}

\addplot[draw=none, fill=mediumpurple148103189]
  coordinates {(20,0.08557656) (30,0.08666943) (40,0.09010133)};
\addlegendentry{MSBL}

\end{axis}
\end{tikzpicture}\begin{tikzpicture}

\definecolor{crimson2143940}{RGB}{214,39,40}
\definecolor{darkgray176}{RGB}{176,176,176}
\definecolor{darkorange25512714}{RGB}{255,127,14}
\definecolor{forestgreen4416044}{RGB}{44,160,44}
\definecolor{mediumpurple148103189}{RGB}{148,103,189}
\definecolor{steelblue31119180}{RGB}{31,119,180}

\begin{axis}[
  width=0.4\columnwidth,
  height=3.9cm,
  scale only axis,
  ybar=0pt,                    
  bar width=1.5,               
  xtick={20,30,40},
  enlarge x limits=0.22,       
    ymin=0, ymax=0.37,            
  ytick={0,0.1,0.2,0.3,0.4},
  ymajorgrids,
  y grid style={darkgray176},
  tick pos=left,
  xtick style={color=black},
  ytick style={color=black},
  y tick label style={font=\notsotiny},
  x tick label style={font=\footnotesize},
  yticklabel style={/pgf/number format/.cd, fixed, fixed zerofill, precision=1},
  scaled y ticks=false,
  xlabel={$K$ (number of active devices)},
  xlabel style={yshift=3pt, font=\footnotesize},
  ylabel style={yshift=-3pt, font=\footnotesize},
  title={(b) $M=50$},            
  title style={yshift=-7pt, font=\footnotesize},
  legend image code/.code={%
    \draw[#1,draw=none] (0cm,-0.5ex) rectangle (1.2ex,0.5ex);
 },
]

\addplot[draw=none, fill=steelblue31119180]
  coordinates {(20,0.01119373) (30,0.01260869) (40,0.01341559)};

\addplot[draw=none, fill=darkorange25512714]
  coordinates {(20,0.02379183) (30,0.02706003) (40,0.02849120)};

\addplot[draw=none, fill=forestgreen4416044]
  coordinates {(20,0.09655262) (30,0.11783863) (40,0.13351333)};

\addplot[draw=none, fill=crimson2143940]
  coordinates {(20,0.00308735) (30,0.00488195) (40,0.00641794)};

\addplot[draw=none, fill=mediumpurple148103189]
  coordinates {(20,0.34158264) (30,0.35904873) (40,0.35892386)};

\end{axis}
\end{tikzpicture}}
\vspace{-2pt}
\centerline{\ref{CPUlegend}}
\vspace{-2pt}
\caption{Running time versus the number of active devices $K$ for $M=20$ (a) and $M=50$ (b), averaged over $Q\in\{20,30,\ldots,80\}$ and 1000 MC trials per $Q$. All optimization-based methods use the stopping criterion~\eqref{eq:terminate_CL} with $\delta=10^{-2}$.}\label{fig:CPU}

\medskip

\centerline{
\begin{tikzpicture}

\definecolor{darkgray176}{RGB}{176,176,176}
\definecolor{forestgreen4416044}{RGB}{44,160,44}
\definecolor{steelblue31119180}{RGB}{31,119,180}

\begin{axis}[
axis line style={semithick},
width=0.94\columnwidth,
height=6.5cm,
ylabel style={yshift=-3pt, font=\scriptsize} ,
xlabel style={yshift=3pt, font=\footnotesize} ,
y tick label style={font=\footnotesize} , 
x tick label style={font=\footnotesize}, 
title style={font=\footnotesize},
ylabel style={yshift=-2pt},
  ytick distance=0.2, 
tick pos=left,
x grid style={darkgray176},
xlabel={$Q$ (number of antennas)},
xmajorgrids,
xmin=17, xmax=83,
xminorgrids,
xtick style={color=black},
y grid style={darkgray176},
ylabel={NMSE},
ymajorgrids,
ymin=0.143710783012801, ymax=0.4,
ytick={0.2,0.25,0.3,0.35,0.4}, 
yminorgrids,
ytick style={color=black},
legend style={at={(1,1)}, anchor=north east, legend cell align=left, align=left, draw=white!15!black,font=\scriptsize,opacity=0.99, row sep=-3pt,inner sep=1pt},
legend image post style={xscale=0.63,yscale=0.77},
legend columns=6,
transpose legend,
]

\addplot [semithick, steelblue31119180, mark=*, mark size=1.9, mark options={solid}]
table {%
20 0.192646808944643
30 0.181101317922742
40 0.175534146371809
50 0.17210221397645
60 0.169822792904592
70 0.168349778245402
80 0.16677817607422
};
\addlegendentry{CL-Hyb-bt $K=20$}

\addplot [thick, steelblue31119180,  mark=square*, mark size=1.9, mark options={solid}]
table {%
20 0.258402621639376
30 0.241067197503111
40 0.232573247599855
50 0.22715785877979
60 0.22387183341594
70 0.221811532212556
80 0.219232778579905
};
\addlegendentry{$K=30$}
\addplot [thick, steelblue31119180, mark=triangle*, mark size=2.4, mark options={solid}]
table {%
20 0.340790720364443
30 0.316222208931332
40 0.303615300363223
50 0.29598911775393
60 0.291205816625278
70 0.287661737586766
80 0.284225362062344
};
\addlegendentry{$K=40$}
\addplot [semithick, forestgreen4416044, mark=*, mark size=1.9, mark options={fill=none,solid}]
table {%
20 0.191854548419014
30 0.177232390756661
40 0.170974412265132
50 0.167448674157034
60 0.165310703149715
70 0.164168902933728
80 0.162835681333449
};
\addlegendentry{CL-MP $K=20$}

\addplot [semithick, forestgreen4416044,  mark=square*, mark size=1.9, mark options={fill=none,solid}]
table {%
20 0.274636032858597
30 0.248532468758153
40 0.23643879039714
50 0.229277430841471
60 0.225139974404685
70 0.222641038443056
80 0.219693730210649
};
\addlegendentry{$K=30$}

\addplot [semithick, forestgreen4416044, mark=triangle*, mark size=2.5, mark options={fill=none,solid}]
table {%
20 0.389446209846489
30 0.34682906949594
40 0.32609344904037
50 0.314353992075277
60 0.307092994188484
70 0.30182809478466
80 0.297398423440109
};
\addlegendentry{$K=40$}

\end{axis}

\end{tikzpicture}}
\vspace{-2pt}
\caption{Channel estimation NMSE vs. the number of antennas $Q$ for different numbers of active devices $K$ when $M=30$.  Number of MC trials: $10000$; stopping tolerance: $\delta = 10^{-2}$.} \label{fig:NMSE}
\end{figure} 

Fig.~\ref{fig:NMSE} shows the channel estimation performance for a pilot length of $M=30$. Here, we only compare  CL-Hyb-bt and CL-MP. As can be observed, only in the settings where CL-MP provides slightly better AD performance ($K=20$ and $M=30$ as shown in the middle panel of Fig.~\ref{fig:AD1}), its channel estimation performance is slightly better than that of CL-Hyb-bt. For $K=40$, CL-Hyb-bt provides clearly improved performance while for $K=30$ the differences in channel estimation performance are marginal. We have not included the CWO, CLMM-2, or EM approaches in this figure, as their curves overlap with that of CL-Hyb-bt.

\section{Conclusions}  \label{sec:concl}

We developed three parallel SCA algorithms for covariance learning in the M-SBL framework. CL-SCA preserves the convex part of the objective and linearizes the concave part, yielding closed-form parallel coordinate updates. CL-FP is derived from a projected fixed-point form of the KKT conditions and corresponds to SCA with a strictly convex quadratic surrogate. For both, we established sufficient decrease, an $\mathcal{O}(T^{-1/2})$ best-iterate stationarity rate, and, under standard step-size conditions, stationarity of all limit points. Furthermore,  we derived a backtracking rule that retains this rate without step-size tuning. Local analysis showed that CL-SCA is more stable in the transient phase and CL-FP is faster near stationarity, which motivated the hybrid CL-Hyb.

Experiments on JADCE in GFRA showed that the proposed methods match the detection performance of CWO at substantially lower runtime and are more than an order of magnitude faster than M-SBL. CL-Hyb-bt converged fastest and gave the best accuracy--complexity trade-off among the CL methods.

Future work includes joint estimation of $\gam$ and $\sigma^2$, robust CL-SCA/-FP extensions to  non-Gaussian losses, and unfolding the iterations into a network.

\appendix 
\subsection{Proof of Lemma~\ref{lem:fixed_point}}\label{app:lem:fixed_point}

In \cite[Eq.~(27)]{yardibi2010source}, it was shown that if
$\gamma_n >0$, then setting $\nabla_n \ell(\gam) = 0$
yields the estimating equation
\begin{align} \label{eq:gamma_i_sol0}
\gamma_n  = H_n(\gam),  \quad  H_n(\gam)
&= \frac{ \tildeb_n^{\hop} \SCM \tildeb_n}{  (\a_n^{\hop} \tildeb_n)^2}
 -  \frac{1}{ \a_n^{\hop} \tildeb_n} .
\end{align}
Applying the Sherman-Morrison formula\footnote{
$(\A + \mathbf{u} \mathbf{v}^\hop)^{-1}=\A^{-1} - \A^{-1} \mathbf{u} \mathbf{v}^\hop \A^{-1}/(1 + \mathbf{v}^\hop \A^{-1} \mathbf{u})$} 
to $\M_{\setminus n}= \M - \gamma_n \a_n \a_n^{\hop}$, we notice that vectors $\tildeb_n$ and $\b_n$ are parallel: 
\beq \label{eq:tildeb}
\tildeb_n 
=  \b_n / (1-\gamma_n d_n).
\eeq 
It follows from \eqref{eq:tildeb} that 
  \beq 
  \a_n^{\hop} \tildeb_n =  
   \frac{  d_n}{1-\gamma_n d_n}
\Rightarrow 
 \frac{1}{ \a_n^{\hop} \tildeb_n}  
  = \frac{1}{d_n} -  \gamma_n .
  \label{eq:gamma_i_sol1_apu_2} 
 \eeq 
Since $\tildeb_n$ and $\b_n$ are parallel, it further  follows that the first term in \eqref{eq:gamma_i_sol0} can be equivalently written as
\beq \label{eq:gamma_i_sol1_apu} 
 \frac{ \tildeb_n^{\hop} \SCM \tildeb_n}{  (\a_n^{\hop} \tildeb_n)^2 }
 =  \frac{ \b_n^{\hop} \SCM \b_n}{  (\a_n^{\hop} \b_n)^2}  = \frac{e_n}{d_n^2}. 
\eeq
Substituting \eqref{eq:gamma_i_sol1_apu} and \eqref{eq:gamma_i_sol1_apu_2} into \eqref{eq:gamma_i_sol0} gives~\eqref{eq:gamma_i_sol2}.
Then recalling \eqref{eq:grad_ell}, we note that  \eqref{eq:gamma_i_sol2} can be rewritten as  \eqref{eq:H_scaled_gradient}, 
i.e., $[H(\gam)]_+$ is a projected-gradient map with the positive diagonal scaling $1/d_n^2=(\a_n^\hop\b_n)^{-2}$. 

Suppose first that $\gam$ satisfies the KKT conditions \eqref{eq:KKT}. If $\gamma_n>0$, then \eqref{eq:H_scaled_gradient} gives $H_n(\gam)=\gamma_n>0$, so $[H_n(\gam)]_+=\gamma_n$. If $ \gamma_n=0$, then $H_n(\gam)=-\nabla_n \ell(\gam)/d_n^2\le0$, so $[H_n(\gam)]_+=0=\gamma_n$. Hence \eqref{eq:lik_ver2} holds. Conversely, suppose $\gam=[H(\gam)]_+$. If $\gamma_n>0$, the projection must be inactive, so $\gamma_n=H_n(\gam)$, and  \eqref{eq:H_scaled_gradient} gives $\nabla_n \ell(\gam)=0$. If $\gamma_n=0$, then $[H_n(\gam)]_+=0$ means $H_n(\gam)\le0$, which
by~\eqref{eq:H_scaled_gradient} is exactly $\nabla_n\ell(\gam)\ge0$. Hence the KKT conditions hold,
completing the proof.

\subsection{Proof of Theorem~\ref{th:SCA}}\label{app:th:SCA}

The convex term $g(\gam)=\tr(\M^{-1}\S)$ can expressed as
\begin{align}
  \gfnc(\gam)
  &= \Tr\left(\big(\gamma_n\,\bm{a}_n\bm{a}_n^\Hmat + \bm{\Sigma}_{\backslash n}\big)^{-1}\,\bm{S} \right) \notag \\
  &= \Tr(\bm{\Sigma}_{\backslash n}^{-1}\bm{S}) - \frac{\gamma_n}{1 + \gamma_n\,\bm{a}_n^\Hmat \bm{\Sigma}_{\backslash n}^{-1} \bm{a}_n}\Tr \big(\bm{\Sigma}_{\backslash n}^{-1}\bm{a}_n\bm{a}_n^\Hmat\bm{\Sigma}_{\backslash n}^{-1}\bm{S}\big) \notag\\
  &= \Tr\big(\bm{\Sigma}_{\backslash n}^{-1}\bm{S}\big) - \frac{\gamma_n \bm{a}_n^\Hmat\bm{\Sigma}_{\backslash n}^{-1}\bm{S}\bm{\Sigma}_{\backslash n}^{-1}\bm{a}_n}{1 + \gamma_n\,\bm{a}_n^\Hmat \bm{\Sigma}_{\backslash n}^{-1} \bm{a}_n} \notag  \\
  &= \Tr\big(\bm{\Sigma}_{\backslash n}^{-1}\bm{S}\big) - \frac{\gamma_n \c_n^\hop \S \c_n}{1 + \gamma_n\,\a_n^\Hmat \c_n}\label{eq:f_cvx_gammai}
\end{align}
where $\c_n$ is defined in  \eqref{eq:b} (and also recall \eqref{eq:tildeb}). Note that in the second identity, we used  the Sherman-Morrison formula. 
Then,  by taking the derivative of \eqref{eq:f_cvx_gammai}, we can write 
$e_n(\gam)$ as
\begin{align}
e_n(\gam)=  - \nabla_{n}\gfnc(\gam)  &= \frac{ \c_n^\hop \S \c_n}{(1 + \gamma_n\,\a_n^\Hmat \c_n)^2} 
   \label{eq:partial_gammai_ncvx2_apu}
\end{align} 
Let $\b_n^{\iter} =  \big(\M^{\iter} \big)^{-1} \a_n$  and $\c_n^{\iter} $ be defined  as
\begin{align} 
\c_n^{\iter} &=\big(\M^{\iter}_{\setminus n} \big)^{-1} \a_n  = (\M^{\iter} - \gamma_n^{\iter}  \a_n \a_n^{\hop})^{-1}  \a_n
\end{align} 
i.e., as values of vectors \eqref{eq:b} evaluated  at  $\gam = \gam^{\iter}$ where $\M^{\iter} = \A \diag(\gam^{\iter}) \A^{\hop} + \sigma^2 \mathbf{I}$.
With the above notations and with the help of \eqref{eq:f_cvx_gammai} and \eqref{eq:partial_gammai_ncvx} we can write \eqref{eq:tilde_f} compactly as 
\begin{align}  \label{eq_tildef_i}
\ell_n(\gamma_n \mid \gam^{\iter}) 
&= - \frac{\gamma_n [\c_n^{\iter}]^\Hmat \S \c_n^{\iter}}{1 + \gamma_n\,\a_n^{\hop} \c_n^{\iter}} + \gamma_n d_n^k
\end{align}
where $d_n^k=d_n(\gam^k)= \a_n^\hop \b_n^k$. Above we ignored additive constants in \eqref{eq_tildef_i}  not dependent on $\gamma_n$.

 Taking the derivative of $\ell_n(\gamma_n \mid \gam^{\iter})$ in \eqref{eq_tildef_i} and setting it to zero yields the equation
\begin{align*}
0 &=  \nabla_{n} \ell_n(\gamma_n \mid \gam^{\iter} ) = - \frac{[\c_n^{\iter}]^\Hmat \S \c_n^{\iter}}{ (1 + \gamma_n\a_n^{\hop} \c_n^{\iter})^2} + d_n^{\iter} , 
\end{align*} 
which has a single root: 
\beq \label{eq:gamma_i_update}
\tilde \gamma_n(\gam^k)  = \sqrt{\frac{ [\c_n^{\iter}]^{\hop} \S \c_n^{\iter}}{d_n^{\iter} (\bm{a}_n^\Hmat \c_n^{\iter})^2}} - \frac{1}{\bm{a}_n^\Hmat \c_n^{\iter}} . 
\eeq
This expression is not convenient due to the term $\c_n^{\iter}$ as it requires the computation of all the inverses $\big(\bm{\Sigma}_{\backslash n}^k\big)^{-1}$ for each coordinate $n=1,\dots,N$. 

Equations \eqref{eq:gamma_i_sol1_apu_2} and \eqref{eq:gamma_i_sol1_apu} hold for all $\gam$, so also at $\gam = \gam^k$, meaning that
\begin{align}
 \frac{1}{ \a_n^{\hop} \tildeb_n^{\iter}}  &= \frac{1}{d_n^{\iter}} -  \gamma_n^{\iter} \quad \mbox{and} \quad  \frac{[\tildeb_n^{\iter}]^{\hop} \SCM \tildeb_n^{\iter}}{  (\a_n^{\hop} \tildeb_n^{\iter})^2} = \frac{e_n^k}{(d_n^k)^2} 
  \label{eq:gamma_i_sol1_apu_1} 
\end{align} 
where $e_n^k=e_n(\gam^k)$ by \eqref{eq:dn_and_en} is $e_n^k=[\b_n^{\iter}] ^{\hop} \SCM \b_n^{\iter}$. 
Then  substituting  \eqref{eq:gamma_i_sol1_apu_1} into \eqref{eq:gamma_i_update} allows us to write the root in an equivalent form 
\begin{align} \label{eq:60}
\tilde{\gamma}_n(\gam^k)
&= \gamma_n^{\iter}  + \sqrt{\frac{e_n^{\iter}}{ (d_n^k)^3}}  -   \frac{1}{d_n^{\iter}} 
= \gamma_n^{\iter} + \frac{1}{d_n^k} \Bigg( \sqrt{\frac{e_n^k}{d_n^k}} - 1 \Bigg)
\end{align}
If  $\tilde \gamma_n \ge 0$, the unconstrained minimizer is feasible (optimal), while if $\tilde \gamma_n < 0$, then $\nabla_{n} \ell_n(\gamma_n) > 0$ for all $\gamma_n > \tilde \gamma_n$, in particular for all $\gamma_n \ge 0$. Thus the solution is 
$\hat \gamma_n(\gam^k)= \max( \tilde{\gamma}_n(\gam^k),0)$ as given in  \eqref{eq:SCA_optim_gammai}. 

\subsection{Proof of Lemma~\ref{lem:g_first_order}}\label{app:lem:g_first_order}

Recall from \eqref{eq:grad_ell} that  $\nabla_n\ell(\gam)=d_n-e_n$  while the update map of Theorem~\ref{th:SCA} can be expressed as  $$\hat\gamma_n(\gam) =\bigl[\gamma_n+\sqrt{e_n/d_n^3}-1/d_n\bigr]_+.$$ Fix $n$ and suppose $\hat\gam(\gam)=\gam$. If $\gamma_n>0$, the projection in $\hat\gamma_n$ is inactive, so $\sqrt{e_n/d_n^3}=1/d_n$, i.e., $e_n=d_n$, and thus $\nabla_n\ell(\gam)=0$. If $\gamma_n=0$, then $\hat\gamma_n=0$ requires $\sqrt{e_n/d_n^3}\le1/d_n$, i.e., $e_n\le d_n$, and hence $\nabla_n\ell(\gam)\ge0$. Thus \eqref{eq:KKT} holds. Conversely, if $\gamma_n>0$ and
$\nabla_n\ell(\gam)=0$, then $e_n=d_n$ and \eqref{eq:SCA_optim_gammai} returns $\hat\gamma_n=\gamma_n$.  If $\gamma_n=0$ and $\nabla_n\ell(\gam)\ge0$, then $e_n\le d_n$ and \eqref{eq:SCA_optim_gammai} returns $\hat\gamma_n=0$. Hence $\hat\gam(\gam)=\gam$.

\subsection{Proof of Lemma~\ref{lem:g_constants}} \label{app:lem:g_constants}

(i) For $\gam\in\mathcal K_0$, dropping the nonnegative trace term gives $\log|\M|\le \ell(\gam)\le \ell(\gam^0)$. Since every eigenvalue of $\M$ is at least $\sigma^2$, $|\M|\ge\lambda_{\max}(\M)\,\sigma^{2(M-1)}$, and the bound $\lambda_{\max}(\M)\le\bar\lambda_0$ follows.  Moreover, since $\a_n^\hop(\A\diag(\gam)\A^\hop)\a_n=\sum_{m}\gamma_m|\a_m^\hop\a_n|^2\ge\gamma_n\|\a_n\|^4$,
the Rayleigh quotient gives
\[
  \lambda_{\max}(\M)\;\ge\;\frac{\a_n^\hop\M\a_n}{\|\a_n\|^2}
  \;\ge\;\sigma^2+\gamma_n\|\a_n\|^2 .
\]
Thus, $0\le\gamma_n\le(\bar\lambda_0-\sigma^2)/\|\a_n\|^2$ on $\mathcal K_0$, so $\mathcal K_0$ is bounded, and it is closed by continuity of $\ell$, hence compact. Finally, $\a_n^\hop\M^{-1}\a_n \ge\|\a_n\|^2/\lambda_{\max}(\M)\ge a_{\min}^2/\bar\lambda_0$ for all $\gam\in\mathcal K_0$ and all $n$.
 
(ii) Differentiating \eqref{eq:grad_ell} once more, with $\bm{\Psi}=\A^\hop\M^{-1}\A$ and $\bm{\Phi}=\A^\hop\M^{-1}\SCM\M^{-1}\A$ (both positive semidefinite, $[\bm{\Psi}]_{nn}=d_n$, $[\bm{\Phi}]_{nn}=e_n$), a direct computation gives
\begin{equation}
    [\nabla^2\ell(\gam)]_{nm}
    =-|[\bm{\Psi}]_{nm}|^2+2\,\mathrm{Re}\{[\bm{\Psi}]_{nm}[\bm{\Phi}]_{nm}^\ast\},
    \label{eq:g_hessian}
\end{equation}
i.e., $\nabla^2\ell =2 \mathrm{Re}(\bm{\Psi}\odot\bm{\Phi}^\ast)-\bm{\Psi}\odot\bm{\Psi} ^\ast$. By the triangle inequality and $\|\mathrm{Re}(\bm C)\|_2\le\|\bm C\|_2$ (note that the spectral norm is invariant to the sign of its argument, so the minus sign in front of $\bm{\Psi}\odot\bm{\Psi}^\ast$ is irrelevant), 
\begin{equation}     \label{eq:g_hessian_triangle}
    \|\nabla^2 \ell(\gam)\|_2     \le     2\,\|\bm{\Psi}\odot\bm{\Phi}^\ast\|_2  + \|\bm{\Psi}\odot\bm{\Psi}^\ast\|_2 .
\end{equation}
Each term is now bounded by the Schur-multiplier inequality: for Hermitian $\bm C$ and positive semidefinite $\bm B$, $\|\bm C\odot\bm B\|_2\le(\max_n[\B]_{nn})\|\bm C\|_2$. Applying it with $(\bm C,\bm B)=(\bm M,\bm N^\ast)$ and $(\bm C,\bm B)=(\bm{\Psi},\bm{\Phi}^\ast)$ (both $\bm{\Phi}^\ast$ and $\bm{\Psi}^\ast$ are positive semidefinite), and using $\| \mathbf{C} \mathbf{B}\|_2 \leq \| \mathbf{C}\|_2 \| \mathbf{B}\|_2$ and 
$\M\succeq\sigma^2\I$, which holds for \emph{every} $\gam\in\mathbb{R}_{\ge0}^N$, and gives
\begin{equation*}
    \|\bm{\Psi}\|_2\le\frac{\|\A\|_2^2}{\sigma^2},
    \quad
    [\bm{\Psi}]_{nn}\le\frac{\|\a_n\|^2}{\sigma^2},
    \quad
    [\bm{\Phi}]_{nn}\le\frac{\|\SCM\|_2\|\a_n\|^2}{\sigma^4},
\end{equation*}
yields
\begin{equation*}
    \|\nabla^2\ell(\gam)\|_2
    \le
    2\,\frac{\|\SCM\|_2\,a_{\max}^2}{\sigma^4}\cdot
    \frac{\|\A\|_2^2}{\sigma^2}
    +\frac{a_{\max}^2}{\sigma^2}\cdot\frac{\|\A\|_2^2}{\sigma^2}
    =L_\nabla
\end{equation*}
uniformly on $\mathbb{R}_{\ge0}^N$.

\subsection{Proof of Theorem~\ref{th:g_rate}} \label{app:th:g_rate}

First we claim that for every $\gam\in\mathbb R_{\ge0}^N$, with
$\p=\hat\gam(\gam)-\gam$ and $d_n=d_n(\gam)$,
\beq 
\begin{aligned}  \label{eq:th:g_rate_apu}
    &\nabla \ell(\gam)^\top \p \le  -\sum_{n=1}^{N} d_n^{2} p_n^{2}  
    \Leftrightarrow \sum_{n=1}^N ( \nabla_n \ell(\gam) p_n - d_n^2 p_n^2) \leq 0
\end{aligned}
\eeq
Indeed, write $e_n=e_n(\gam)\ge0$, $r_n=e_n/d_n$ (recall $d_n>0$), and $s_n=1+d_n p_n$. Since $\nabla_n \ell(\gam)=d_n-e_n=d_n(1-r_n)$  and $d_n p_n=s_n-1$,  each summand  in \eqref{eq:th:g_rate_apu} satisfies
\begin{align*}
  (d_n-e_n)p_n+d_n^2p_n^2 &=(1-r_n)(s_n-1)+(s_n-1)^2 \\
  &=(s_n-1)(s_n-r_n),
\end{align*}
which is nonpositive whenever $s_n$ lies between $r_n$ and $1$.  From the update  $\hat\gamma_n=[\gamma_n+\sqrt{e_n/d_n^3}-1/d_n]_+$, we notice that $p_n=\hat\gamma_n-\gamma_n
=\max\{\sqrt{e_n/d_n^3}-1/d_n,\,-\gamma_n\}$. Multiplying by $d_n>0$ and
adding one gives
\begin{equation*}
  s_n=1+d_n p_n=\max\bigl\{\sqrt{r_n}, 1-\gamma_n d_n  \bigr\},
\end{equation*}
where the first branch corresponds to an inactive projection and the second to $\hat\gamma_n=0$. If $s_n=\sqrt{r_n}$, then $s_n$ is the geometric mean of $r_n$ and $1$ and hence lies between them. Otherwise
$\sqrt{r_n}<s_n=1-\gamma_n d_n \le1$, so that $r_n\le\sqrt{r_n}<s_n\le1$.
Summing over $n$ proves~\eqref{eq:th:g_rate_apu}.  Furthermore, for $\gam^k \in\mathcal K_0$,  invoking $d_n(\gam^k)\ge\underline d$ by  Lemma~\ref{lem:g_constants} gives
$\nabla\ell(\gam^k)^\top\p^k \le-\underline{d}^{2}\|\p^k\|_2^2$. 

Assume $\gam^k \in\mathcal K_0$. Since $\gam^{k+1}=\gam^k+\eta^k \p^k$, with $\mathbf{p}^k = \hat{\gam}(\gam^k) - \gam^k$, lies on the segment between $\gam^k$ and $\hat\gam(\gam^k)$,  the quadratic upper bound \eqref{eq:quad_upperbound_L} gives
\begin{align}
    \ell(\gam^{k+1})
     &\le   \ell(\gam^k) +\eta^k \nabla \ell(\gam^k)^\top \p^k  +\frac{L_\nabla}{2}(\eta^k)^2\|\p^k\|_2^2   \notag \\ 
     &\le  \ell(\gam^k) -\eta^k \left(\underline{d}^2-\frac{L_{\nabla}\eta^k}{2}\right)  \|\p^k\|_2^2 \notag\\
    &\le  \ell(\gam^k)  -\frac{\underline d^2}{2}\eta^k \|\p^k\|_2^2,    \label{eq:app_sufficient_decrease}
\end{align}
where the 2nd inequality uses previous assertion, and the last uses $\eta^k \le \underline{d}^2/L_{\nabla}$. Since $\gam^0 \in\mathcal K_0$, \eqref{eq:app_sufficient_decrease} also proves inductively that every iterate remains in $\mathcal K_0$. 

Note that summing \eqref{eq:g_sufficient_decrease} over $k=0,\ldots,T-1$ and telescoping yields
\begin{equation*}
    \frac{\underline d^{\,2}}{2}
    \sum_{k=0}^{T-1}\eta^k\|\bm p^k\|_2^2 \le \ell(\gam^0)- \ell(\gam^T)   \le \ell(\gam^0)- \ell_{\inf},
\end{equation*} 
and bounding the weighted average from below by its smallest term gives \eqref{eq:g_rate}. For $\gam^0=\bm0$, note $\ell(\bm0)=\tr(\SCM)/\sigma^2+M \log\sigma^2$ and $\ell_{\inf}\ge M \log\sigma^2$. Thus $\ell(\bm 0)- \ell_{\inf}\le\tr(\SCM)/\sigma^2$. If $\inf_k\eta^k>0$, the summability of $\eta^k\|\bm p^k\|_2^2$ forces $\|\bm p^k\|_2\to0$. Every accumulation point $\gam^\star$ then satisfies $\hat\gam(\gam^\star)=\gam^\star$ by continuity of the update map, and is KKT-stationary by Lemma~\ref{lem:g_first_order}.

\subsection{Proof of Theorem~\ref{th:fp_rate}} \label{app:th:fp_rate}

We claim that for every $\gam\in\mathbb{R}_{\ge0}^N$, with
$\bm q=[H(\gam)]_+-\gam$ and $g_n=[\nabla\ell(\gam)]_n=d_n-e_n$,
\begin{equation}
    \nabla\ell(\gam)^\top\bm q
    \le-\sum_{n=1}^N d_n^2\,q_n^2 .
    \label{eq:fp_descent_direction}
\end{equation} 
Fix $n$. If the projection is inactive, then $q_n=H_n(\gam)-\gamma_n=-g_n/d_n^2$ by \eqref{eq:H_scaled_gradient}, so $g_n q_n=-g_n^2/d_n^2=-d_n^2 q_n^2$, with equality. If the projection
is active and $\gamma_n>0$, then $	q_n=-\gamma_n$ and $H_n(\gam)<0$, i.e., $g_n>\gamma_nd_n^2>0$, hence $g_n q_n=-\gamma_ng_n<-\gamma_n^2d_n^2=-d_n^2 q_n^2$. If the projection is active and $\gamma_n=0$, then $q_n=0$ and the term vanishes. Summing over $n$ proves \eqref{eq:fp_descent_direction}, and for $\gam\in\mathcal K_0$ invoking $d_n(\gam)\ge\underline d$ gives $\nabla\ell(\gam)^\top\bm q\le-\underline d^{\,2}\|\bm q\|_2^2$. 

Proceed by induction, assume $\gam^k\in\mathcal K_0$ (true for $k=0$). Since $\gam^{k+1}$ lies on the segment $[\gam^k,[H(\gam^k)]_+]\subset\mathbb{R}_{\ge0}^N$, on which $\nabla\ell$ is $L_\nabla$-Lipschitz by Lemma~\ref{lem:g_constants}(ii), the descent lemma and \eqref{eq:fp_descent_direction} with
$d_n(\gam^k)\ge\underline d$ give
\begin{align*}
    \ell(\gam^{k+1})
    &\le\ell(\gam^k)
    +\eta^k\nabla\ell(\gam^k)^\top\bm q^k
    +\frac{L_\nabla}{2}(\eta^k)^2\|\bm q^k\|_2^2\\
    &\le\ell(\gam^k)
    -\eta^k\Bigl(\underline d^{\,2}
    -\frac{L_\nabla\,\eta^k}{2}\Bigr)\|\bm q^k\|_2^2 \\
    &\le\ell(\gam^k)
    -\frac{\underline d^{\,2}}{2}\eta^k\|\bm q^k\|_2^2 ,
\end{align*}
where the last step uses $\eta^k\le\underline d^{\,2}/L_\nabla$. In particular $\ell(\gam^{k+1})\le\ell(\gam^k)\le\ell(\gam^0)$, so $\gam^{k+1}\in\mathcal K_0$, closing the induction and proving~\eqref{eq:fp_sufficient_decrease}. The rest follows as in proof of Theorem~\ref{th:g_rate}



\begin{thebibliography}{10}
\providecommand{\url}[1]{#1}
\csname url@samestyle\endcsname
\providecommand{\newblock}{\relax}
\providecommand{\bibinfo}[2]{#2}
\providecommand{\BIBentrySTDinterwordspacing}{\spaceskip=0pt\relax}
\providecommand{\BIBentryALTinterwordstretchfactor}{4}
\providecommand{\BIBentryALTinterwordspacing}{\spaceskip=\fontdimen2\font plus
\BIBentryALTinterwordstretchfactor\fontdimen3\font minus
  \fontdimen4\font\relax}
\providecommand{\BIBforeignlanguage}[2]{{%
\expandafter\ifx\csname l@#1\endcsname\relax
\typeout{** WARNING: IEEEtran.bst: No hyphenation pattern has been}%
\typeout{** loaded for the language `#1'. Using the pattern for}%
\typeout{** the default language instead.}%
\else
\language=\csname l@#1\endcsname
\fi
#2}}
\providecommand{\BIBdecl}{\relax}
\BIBdecl

\bibitem{duarte2011structured}
M.~F. Duarte and Y.~C. Eldar, ``Structured compressed sensing: From theory to
  applications,'' \emph{{IEEE} Trans. Signal Process.}, vol.~59, no.~9, pp.
  4053--4085, 2011.

\bibitem{tropp2006algorithms}
J.~A. Tropp, A.~C. Gilbert, and M.~J. Strauss, ``Algorithms for simultaneous
  sparse approximation. {Part I:} greedy pursuit,'' \emph{Signal Processing},
  vol.~86, pp. 572--588, 2006.

\bibitem{tropp:2006}
J.~A. Tropp, ``Algorithms for simultaneous sparse approximation. {Part II:}
  convex relaxation,'' \emph{Signal Processing}, vol.~86, pp. 589--602, 2006.

\bibitem{chen_huo:2006}
J.~Chen and X.~Huo, ``Theoretical results on sparse representations of
  multiple-measurement vectors,'' \emph{{IEEE} Trans. Signal Process.},
  vol.~54, no.~12, pp. 4634--4643, 2006.

\bibitem{eldar_rauhut:2010}
Y.~C. Eldar and H.~Rauhut, ``Average case analysis of multichannel sparse
  recovery using convex relaxation,'' \emph{{IEEE} Trans. Inf. Theory},
  vol.~56, no.~1, pp. 505--519, 2010.

\bibitem{blanchard2014greedy}
J.~D. Blanchard, M.~Cermak, D.~Hanle, and Y.~Jin, ``Greedy algorithms for joint
  sparse recovery,'' \emph{{IEEE} Trans. Signal Process.}, vol.~62, no.~7, pp.
  1694 -- 1704, 2014.

\bibitem{tipping2001sparse}
M.~E. Tipping, ``Sparse bayesian learning and the relevance vector machine,''
  \emph{J. Mach. Learn. Res.}, vol.~1, no. Jun, pp. 211--244, 2001.

\bibitem{wipf2004sparse}
D.~P. Wipf and B.~D. Rao, ``Sparse bayesian learning for basis selection,''
  \emph{{IEEE} Trans. Signal Process.}, vol.~52, no.~8, pp. 2153--2164, 2004.

\bibitem{scutari2013decomposition}
G.~Scutari, F.~Facchinei, P.~Song, D.~P. Palomar, and J.-S. Pang,
  ``Decomposition by partial linearization: Parallel optimization of
  multi-agent systems,'' \emph{{IEEE} Trans. Signal Process.}, vol.~62, no.~3,
  pp. 641--656, 2013.

\bibitem{cotter2005sparse}
S.~F. Cotter, B.~D. Rao, K.~Engan, and K.~Kreutz-Delgado, ``Sparse solutions to
  linear inverse problems with multiple measurement vectors,'' \emph{{IEEE}
  Trans. Signal Process.}, vol.~53, no.~7, pp. 2477--2488, 2005.

\bibitem{wipf2007empirical}
D.~P. Wipf and B.~D. Rao, ``An empirical {B}ayesian strategy for solving the
  simultaneous sparse approximation problem,'' \emph{{IEEE} Trans. Signal
  Process.}, vol.~55, no.~7, pp. 3704--3716, 2007.

\bibitem{esfandiari2026mm}
M.~Esfandiari, E.~Ollila, and D.~P. Palomar, ``Majorization-minimization
  covariance learning algorithm for sparse signal recovery,'' in \emph{2026
  34th European Signal Processing Conference ({EUSIPCO})}, 2026, pp. 1--5.

\bibitem{al2017gamp}
M.~Al-Shoukairi, P.~Schniter, and B.~D. Rao, ``A {GAMP}-based low complexity
  sparse {B}ayesian learning algorithm,'' \emph{{IEEE} Trans. Signal Process.},
  vol.~66, no.~2, pp. 294--308, 2017.

\bibitem{hashemi2022joint}
A.~Hashemi, C.~Cai, Y.~Gao, S.~Ghosh, K.-R. M{\"u}ller, S.~S. Nagarajan, and
  S.~Haufe, ``Joint learning of full-structure noise in hierarchical bayesian
  regression models,'' \emph{{IEEE} Trans. Med. Imag.}, vol.~43, no.~2, pp.
  610--624, 2022.

\bibitem{marata2025activity}
L.~Marata, E.~Ollila, and H.~Alves, ``Activity detection for massive random
  access using covariance-based matching pursuit,'' \emph{{IEEE} Trans. Veh.
  Technol.}, vol.~74, no.~11, pp. 17\,292--17\,303, 2025.

\bibitem{pote2025theory}
R.~R. Pote and B.~D. Rao, ``Theory and practice of light-weight sequential sbl
  algorithm: An alternative to {OMP},'' \emph{{IEEE} Trans. Signal Process.},
  2025.

\bibitem{liu2018sparse}
L.~Liu, E.~G. Larsson, W.~Yu, P.~Popovski, C.~Stefanovic, and E.~De~Carvalho,
  ``Sparse signal processing for grant-free massive connectivity: A future
  paradigm for random access protocols in the internet of things,''
  \emph{{IEEE} Signal Process. Mag.}, vol.~35, no.~5, pp. 88--99, 2018.

\bibitem{liu2018massive}
L.~Liu and W.~Yu, ``Massive connectivity with massive {MIMO}-{P}art {I}: Device
  activity detection and channel estimation,'' \emph{{IEEE} Trans. Signal
  Process.}, vol.~66, no.~11, pp. 2933--2946, 2018.

\bibitem{chen2018sparse}
Z.~Chen, F.~Sohrabi, and W.~Yu, ``Sparse activity detection for massive
  connectivity,'' \emph{{IEEE} Trans. Signal Process.}, vol.~66, no.~7, pp.
  1890--1904, 2018.

\bibitem{fengler2021non}
A.~Fengler, S.~Haghighatshoar, P.~Jung, and G.~Caire, ``Non-{B}ayesian activity
  detection, large-scale fading coefficient estimation, and unsourced random
  access with a massive {MIMO} receiver,'' \emph{{IEEE} Trans. Inf. Theory},
  vol.~67, no.~5, pp. 2925--2951, 2021.

\bibitem{chen2019covariance}
Z.~Chen, F.~Sohrabi, Y.-F. Liu, and W.~Yu, ``Covariance based joint activity
  and data detection for massive random access with massive {MIMO},'' in
  \emph{{ICC} 2019}, 2019, pp. 1--6.

\bibitem{chen2021phase}
------, ``Phase transition analysis for covariance-based massive random access
  with massive {MIMO},'' \emph{{IEEE} Trans. Inf. Theory}, vol.~68, no.~3, pp.
  1696--1715, 2021.

\bibitem{wang2025robust}
X.~Wang, E.~Ollila, and S.~A. Vorobyov, ``Robust activity detection for massive
  random access,'' \emph{{IEEE} Trans. Signal Process.}, vol.~73, pp. 3513 --
  3527, 2025.

\bibitem{tan2025fast}
B.~Tan and Y.~Cui, ``Fast {MLE} and {MAPE}-based device activity detection for
  grant-free access via {PSCA} and {PSCA-Net},'' \emph{{IEEE} Trans. Wireless
  Commun.}, vol.~24, no.~6, pp. 5066--5083, 2025.

\bibitem{ollila2026joint}
E.~Ollila, M.~Esfandiari, and D.~P. Palomar, ``Joint activity detection and
  channel estimation for massive random access using {SBL} and {SCA},'' in
  \emph{2026 {IEEE} {SPAWC}}, Athens, Greece, Sep 6--9, 2026, pp. 1--5.

\bibitem{yardibi2010source}
T.~Yardibi, J.~Li, P.~Stoica, M.~Xue, and A.~B. Baggeroer, ``Source
  localization and sensing: A nonparametric iterative adaptive approach based
  on weighted least squares,'' \emph{{IEEE} Trans. Aerosp. Electron. Syst.},
  vol.~46, no.~1, pp. 425--443, 2010.

\bibitem{wipf2007beamforming}
D.~Wipf and S.~Nagarajan, ``Beamforming using the relevance vector machine,''
  in \emph{Proceedings of the 24th international conference on Machine
  learning}, 2007, pp. 1023--1030.

\bibitem{scutari2018parallel}
G.~Scutari and Y.~Sun, ``Parallel and distributed successive convex
  approximation methods for big-data optimization,'' in \emph{Multi-Agent
  Optimization: Cetraro, Italy 2014}.\hskip 1em plus 0.5em minus 0.4em\relax
  Springer, 2018, pp. 141--308.

\bibitem{nesterov2004introductory}
Y.~Nesterov, \emph{Introductory Lectures on Convex Optimization: A Basic
  Course}.\hskip 1em plus 0.5em minus 0.4em\relax Boston, MA: Kluwer Academic
  Publishers, 2004.

\bibitem{bertsekas1999nonlinear}
D.~P. Bertsekas, \emph{Nonlinear Programming}, 2nd~ed.\hskip 1em plus 0.5em
  minus 0.4em\relax Belmont, MA: Athena Scientific, 1999.

\bibitem{sun2016robust}
Y.~Sun, P.~Babu, and D.~P. Palomar, ``Robust estimation of structured
  covariance matrix for heavy-tailed elliptical distributions,'' \emph{{IEEE}
  Trans. Signal Process.}, vol.~64, no.~14, pp. 3576--3590, 2016.

\end{thebibliography}
\end{document}